\tikzset{font={\fontsize{10pt}{12}\selectfont}}
\newcommand\scalemath[2]{\scalebox{#1}{\mbox{\ensuremath{\displaystyle #2}}}}
\newtheorem{lemma}{Lemma}
\newtheorem{remark}{Remark}
\def\bA{\mathbf{A}}
\def\bB{\mathbf{B}}
\def\bI{\mathbf{I}}
\def\complexC{{\mathbb{C}}}
\def\realR{{\mathbb{R}}}
\begin{document}
\title{Design of a Multi-User Wireless Powered Communication System Employing Either Active IRS or AF Relay}

\author{Omid Rezaei, Maryam Masjedi, Ali Kanaani, Mohammad Mahdi Naghsh\IEEEauthorrefmark{1}, Saeed~Gazor, and Mohammad Mahdi Nayebi
	
	\thanks{O. Rezaei and M. M. Nayebi are with the Department of Electrical Engineering, Sharif University of Technology, Tehran, 11155-4363, Iran. M. Masjedi, A. Kanaani, and M. M. Naghsh are with the Department of Electrical and Computer Engineering, Isfahan University of Technology, Isfahan, 84156-83111, Iran. S. Gazor is with the Department of Electrical and Computer Engineering, Queen’s University, Kingston, Ontario, K7L 3N6, Canada.
		
		\IEEEauthorrefmark{1}Please address all the correspondence to M. M. Naghsh, Phone: (+98) 31-33912450; Fax: (+98) 31-33912451; Email: mm\_naghsh@cc.iut.ac.ir}
}
\maketitle

\begin{abstract}
 In this paper, we optimize a Wireless Powered Communication (WPC) system including multiple pair of users, where transmitters employ single-antenna to transmit their information and power to their receivers with the help of one multiple-antennas Amplify-and-Forward (AF) relay or an active Intelligent Reflecting Surface (IRS). We propose a joint Time Switching (TS) scheme in which transmitters, receivers, and the relay/IRS are either in their energy or information transmission/reception modes. The transmitted multi-carrier unmodulated and modulated waveforms are used for Energy Harvesting (EH) and Information Decoding (ID) modes, respectively. In order to design an optimal fair system, we maximize the minimum rate of all pairs for both relay and IRS systems through a unified framework.
This framework allows us to simultaneously design energy waveforms, find optimal relay/IRS amplification/reflection matrices, allocate powers for information waveforms, and allocate time durations for various phases. In addition, we take into account the non-linearity of the EH circuits in our problem. This problem turns out to be non-convex. Thus, we propose an iterative algorithm by using the Minorization-Maximization (MM) technique, which quickly converges to the optimal solution. Numerical examples show that the proposed method improves the performance of the multi-pair WPC relay/IRS system under various setups.
\end{abstract}

\begin{IEEEkeywords}
Fair Throughput Maximization, Intelligent Reflecting Surface (IRS), Minorization-Maximization (MM), Relay Networks, Wireless Powered Communication (WPC).
\end{IEEEkeywords}


\IEEEpeerreviewmaketitle

\section{Introduction}
\IEEEPARstart{W}{ireless} Power Transfer (WPT) technology is introduced to extend the lifetime of devices in wireless networks in which the energy is emitted from the dedicated power sources to the devices~\cite{ho2012optimal}.
Interestingly, WPT enables Simultaneous Wireless Information and Power Transfer (SWIPT)~\cite{perera2017simultaneous}, where devices not only receive and decode information, but also harvest the energy from Radio Frequency (RF) signals. The Time Switching (TS) and Power Splitting (PS) schemes are two well-known implementing protocols of SWIPT~\cite{liu2012wireless}.
Recently, SWIPT models are designed to employ relays to further enhance the coverage and spectral efficiency of wireless networks~\cite{rostampoor2017energy,lin2018source,chen2015joint}. In~\cite{rostampoor2017energy}, a Multiple-Input Multiple-Output (MIMO) two-way relay system is introduced in which two transceivers exchange their information through a relay. The authors in~\cite{lin2018source} designed the relay and source precoders by minimizing the bit error rate at the destination for a full-duplex MIMO relay system and SWIPT-enabled users. A similar system with a half-duplex two-way relay is designed in~\cite{chen2015joint} by minimizing the mean square error at the destination.

Another impactful technology that is currently emerging is to use of the Intelligent Reflecting Surface (IRS) in wireless communication systems. This promising solution not only is capable of improving energy delivery but also can enhance the spectral efficiency of future wireless communication networks.~\cite{renzo2019smart}. An IRS is an array of large number of Reflecting Elements (REs) designed to have controllable electromagnetic properties. Each RE introduces a phase shift on the impinging signal, allowing beamforming/manipulation of the reflection waveforms. Precisely, the IRS matrix allows controlling the reflected signal (amplify, attenuated, steer in the desired direction, and so on) toward optimal desired directions by purposefully designing the phase shift matrix. The IRS is exploited recently in SWIPT systems in~\cite{wu2019weighted,khalili2021multi,wu2020joint,pan2020intelligent}. The weighted sum harvested power maximization problem was studied in~\cite{wu2019weighted} for an IRS-aided SWIPT model in which a multiple-antennas access point serves multiple single-antenna users. In~\cite{khalili2021multi}, the model in~\cite{wu2019weighted} is extended for a more practical multi-objective optimization problem by taking into account the trade-off between sum rate and sum harvested power maximization. In~\cite{wu2020joint}, the total transmission power is minimized for a Multiple-Input Single-Output (MISO) SWIPT system employing multiple IRSs. In~\cite{pan2020intelligent}, the MISO SWIPT in~\cite{wu2019weighted,khalili2021multi,wu2020joint} is extended to the MIMO case, and the weighted sum rate is maximized in an IRS-assisted system.

The design of the energy waveform remarkably affects the  performance of WPT-based systems. Indeed, an efficient waveform leads to significant improvement in the efficacy of power delivery. Experiments reveal that signals with high Peak to Average Power Ratio (PAPR) such as multi-sine signals provide more DC power at Energy Harvesting (EH) circuits than constant envelope signals with the same average RF power~\cite{boaventura2013optimum}. Based on this interesting observation, a multi-sine waveform design for WPT has been examined in several works~\cite{clerckx57,clerckx60,clerckx48,clerckx59,zhao2021irs,clerckx58,clerckx69,clerckx70,clerckx2017wireless}.
Waveform design with a non-linear EH model was considered in~\cite{clerckx57} and~\cite{clerckx60} for MISO and Single-Input Single-Output (SISO) systems, respectively. The authors of~\cite{clerckx48} proposed a low-complexity method for a waveform design in a SISO WPT system.
In~\cite{clerckx59}, the transmit waveform was designed based on limited Channel State Information (CSI) feedback WPT system. Then, the authors in~\cite{zhao2021irs} studied waveform design for an IRS-aided SWIPT MISO system.
The aforementioned methods for design of single-user systems were extended to the multi-user case in \cite{clerckx58}.
Also, a waveform design was performed in~\cite{clerckx69} for wireless powered backscatter communication networks, and this work was extended to multi-user backscatter systems in~\cite{clerckx70}. The authors of~\cite{clerckx2017wireless} investigated the waveform and transceiver design problem in a MISO SWIPT system and determined the multi-sine waveforms for Information Decoding (ID) and EH phases.

In this paper, we  optimize a multi-user wireless powered relay/IRS system using a multi-sine waveform with the following main contributions:
\begin{itemize}
 \item \textit{Relay model}:  To the best of our knowledge, a multi-sine waveform design for multiple user pairs in a wireless powered relay system has not been addressed in the literature. In this paper, we consider a multi-carrier Wireless Powered Communication (WPC) system for multi-user relay channels. Precisely, in our proposed model, an amplify-and-forward (AF) relay provides energy/information transmission from ${K}$ transmitters to their receivers by adopting the TS scheme in all nodes\footnote{Note that the system in~\cite{lee2018joint}, where a TS scheme is only applied for a receiver, is considered as a special case of the proposed joint TS scheme.}. Herein, the aim is to design the multi-carrier unmodulated energy waveforms and the allocated power for information waveforms at the transmitters, the amplification matrices in a relay, and the time durations for the EH and ID modes in order to maximize the minimum rate of the user pairs. In addition, we consider the effect of the non-linearity of EH circuits in the design problem.

 \item \textit{IRS model}: In the case of IRS-assisted communication, multi-pair WPC has not been considered in the literature, and therefore, herein, we consider this type of IRS-assisted systems. Precisely, in this case, an active IRS\footnote{Note that in an active IRS, REs can amplify the reflected signals using their reflection-type amplifiers~\cite{zhang2021active}.} replaced with the AF relay in the proposed system model mentioned in the above paragraph. Also, some comparisons are made between relay and IRS models in terms of architecture and performance (see Remark~\ref{kl}-\ref{kl2}).

 \item \textit{Unified consideration of relay and IRS}: Both proposed AF relay and active IRS-aided systems are modeled under a unified formulation, and we handle the resulting optimization problems under a unified mathematical umbrella. We show that the problem is non-convex and consequently, is hard to solve. To deal with this problem, we devise a method based on the Minorization-Maximization (MM) technique. 
 	Interestingly, the proposed algorithm can deal with relay and IRS systems by switching between Kronecker and Hadamard products for parameters used in the algorithm (see Lemma~2).
 \item \textit{Sub-optimal methods}: Some sub-optimal methods with lower signaling overhead and computational complexity are proposed and then, their performance are compared.

 \item \textit{Numerical result}: Simulation results are reported to illustrate the effectiveness of the proposed method; particularly, the impact of the relay/IRS matrix and energy waveform design. Also, numerical examples show that the minimum rate of users increases linearly/super-linearly with the number of antennas/REs in relay/IRS systems.
\end{itemize}
The rest of this paper is organized as follows: The signal and system models are explained in Section \ref{sys}. In Section \ref{sum}, the minimum rate maximization problem is formulated, and a unified optimization framework is proposed for both relay and IRS models. Section \ref{num} presents numerical examples to illustrate the effectiveness of the proposed method. Finally, conclusions are drawn in Section \ref{con}.

\emph{Notation:} Bold lowercase (uppercase) letters are used for vectors (matrices).
The notations $\textrm{arg} (\cdot)$, $\mathbb{E} [\cdot ]$, $\Re \{ \cdot \}$, $\|\cdot\|_2$, ${(\cdot)^{{T}}}$, $(\cdot)^{{H}}$, $(\cdot)^{{*}}$, $\mbox{tr} \{ \cdot \}$, $\mathbf{\lambda}_{\textrm{max}}(\cdot)$, $\textrm{vec}(\cdot)$, $\textrm{Diag}(\cdot)$, ${\nabla}_{{\mathbf{x}}} f(\cdot)$ and ${\nabla}^{2}_{{\mathbf{x}}} f(\cdot)$ indicate the phase of a complex number, statistical expectation, real-part, $l_2$-norm of a vector,  transpose, Hermitian,  complex conjugate, trace of a matrix, the principal eigenvalue of a matrix, stacking of the column of a matrix, a diagonal matrix formed by the entries, the gradient of a function with respect to (w.r.t.)  $\mathbf{x}$ and the Hessian of a function w.r.t. $\mathbf{x}$, respectively. The symbols $\otimes$ and $\odot$ stand for the Kronecker and Hadamard products of two matrices. We denote $\mathcal{CN}(\boldsymbol{\omega},\mathbf{\Sigma})$ as a circularly symmetric complex Gaussian (CSCG) distribution with mean $\boldsymbol{\omega}$ and covariance $\mathbf{\Sigma}$. The set $\mathbb{R}_+$ represents non-negative real numbers and $\complexC^{N \times N}$ and $\mathbb{D}^{N \times N}$ are the set of ${N \times N}$ complex and complex diagonal matrices, respectively. The set of ${N \times N}$ positive (semi-)definite and identity matrices are denoted by $\mathbb{S}_{++}^{N}\subset \complexC^{N \times N}$ ($\mathbb{S}_{+}^{N} \subset \complexC^{N \times N}$) and $\bI_{N}$, respectively. The notation $\bA \succ \bB$ ($\bA \succeq\bB$) means that $\bA-\bB$ is positive (semi-)definite.
\section{System Model} \label{sys}
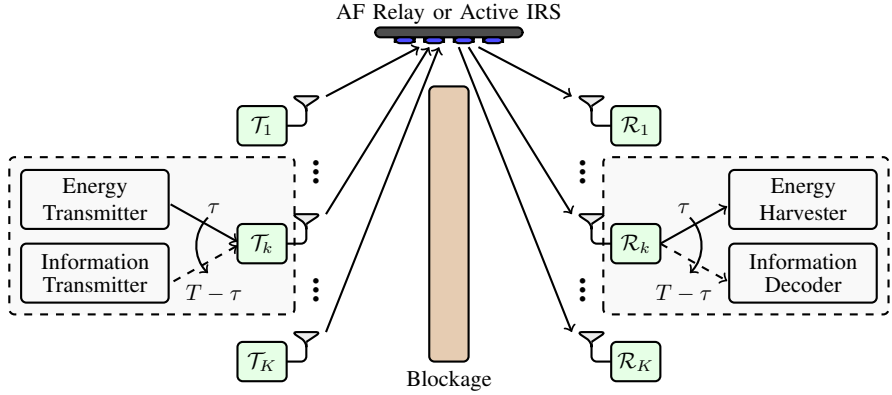
\begin{figure}
 \centering
 \begin{tikzpicture}[even odd rule,rounded corners=2pt,x=12pt,y=12pt,scale=.62,every node/.style={scale=.85}]

 \draw[thick,fill=green!10] (-8.5-1.25,5-1) rectangle ++(2.5,2) node[midway]{$\mathcal{T}_1$};
 \draw[thick] (-6-1.25,6-1)--++(1,0)--+(0,1);
 \draw[thick,fill=gray!15] (-4.5-1.25-.5,7-1)--++(.75,0.5)--++(-1.5,0)--++(.75,-.5)--++(0,-.1);
 \draw[->,thick] (-5.25,6.5)--++(5.25-.5,2.5);
 \draw (-4.5-1.25,3.5-.5) node {\huge$\vdots$};

 \draw[dashed,thick,fill=gray!5] (-5.5-1.35,-4.6) rectangle ++(-14.5,8) ;

 \draw[<-,thick] (-8.5-1.25,-.9)--++(-3.4,1.9);
 \draw[<-,thick,black!100] (-10-1.25,-2.5) arc (-135:-225:2) ;
 \draw[<-,dashed,thick] (-8.5-1.25,-1.1)--++(-3.4,-1.9);

 \node [] at (-9.7-1.25,.8) {$\tau$};
 \node [] at (-9.7-1.25,-3.5) {$T- \tau$};
 \draw[thick] (-12-1.25,-4) rectangle ++(-7.5,3) node[midway]{\shortstack[c]{Information\\ Transmitter }};
 \draw[thick] (-12-1.25,-.25) rectangle ++(-7.5,3) node[midway]{ \shortstack[c]{ Energy\\ Transmitter }};

 \draw[thick,fill=green!10] (-8.5-1.25,-2) rectangle ++(2.5,2) node[midway]{ $\mathcal{T}_k$};
 \draw[thick] (-6-1.25,-1)--++(1,0)--+(0,1);
 \draw[thick,fill=gray!15] (-4.5-1.25-.5,0)--++(.75,0.5)--++(-1.5,0)--++(.75,-.5)--++(0,-.1);
 \draw[->,thick] (-5.25,0.5)--++(5.25,8.5);

 \draw (-4.5-1.25,-3) node {\huge$\vdots$};

 \draw[thick,fill=green!10] (-8.5-1.25,-8) rectangle ++(2.5,2) node[midway]{$\mathcal{T}_{{K}}$};
 \draw[thick] (-6-1.25,-8+1)--++(1,0)--+(0,1);
 \draw[thick,fill=gray!15] (-4.5-1.25-.5,-7+1)--++(.75,0.5)--++(-1.5,0)--++(.75,-.5)--++(0,-.1);
    \draw[->,thick] (-5.25,-5.5)--++(5.25+0.5,14.5);

\draw[thick,fill=black!70] (-2.75,9.5) rectangle ++(7.5,.5);
\draw[thick,fill=blue!70] (-2.75+1,9.2) rectangle ++(1,.3);
\draw[thick,fill=blue!70] (-2.75+1+1.5,9.2) rectangle ++(1,.3);
\draw[thick,fill=blue!70] (-2.75+1+3,9.2) rectangle ++(1,.3);
\draw[thick,fill=blue!70] (-2.75+1+4.5,9.2) rectangle ++(1,.3);
\node[] at (1,10.7) {AF Relay or Active IRS};

 \draw[thick,fill=green!10] (8.5-1.25+2,5-1) rectangle ++(2.5,2) node[midway]{ $\mathcal{R}_1$};
 \draw[thick] (8.5-1.25+2,6-1)--++(-1,0)--+(0,1);
 \draw[thick,fill=gray!15] (7-1.25+.5+2,7-1)--++(.75,0.5)--++(-1.5,0)--++(.75,-.5)--++(0,-.1);
    \draw[<-,thick] (7.25,6.5)--++(-4.75,2.5);
 \draw (7-1.2+2,3) node {\huge$\vdots$};

 \draw[dashed,thick,fill=gray!5] (8-1.15+2,-4.6) rectangle ++(14.5,8) ;

 \draw[thick,fill=green!10] (8.5-1.25+2,-2) rectangle ++(2.5,2) node[midway]{ $\mathcal{R}_k$};
 \draw[thick] (8.5-1.25+2,-1)--++(-1,0)--+(0,1);
 \draw[thick,fill=gray!15] (7-.75+2,0)--++(.75,0.5)--++(-1.5,0)--++(.75,-.5)--++(0,-.1);
 \draw[<-,thick] (7.25,0.5)--++(-5.25,8.5);
 \draw (7-1.2+2,-3) node {\huge$\vdots$};

 \draw[->,thick] (11-1.25+2,-1)--++(3.4,1.9);
 \draw[<-,thick,black!100] (12.5-1.25+2,-2.5) arc (-45:45:2) ;
 \draw[->,dashed,thick] (11-1.25+2,-1)--++(3.4,-1.9);

 \node [] at (12.2-1.25+2,.8) {$\tau$};
 \node [] at (12.2-1.25+2,-3.5) {$T- \tau$};

 \draw[thick] (14.5-1.25+2,-4) rectangle ++(7.5,3) node[midway]{\shortstack[c]{Information \\ Decoder }};
 \draw[thick] (14.5-1.25+2,-.25) rectangle ++(7.5,3) node[midway]{\shortstack[c]{Energy \\ Harvester }};

 \draw[thick,fill=green!10] (8.5-1.25+2,-8) rectangle ++(2.5,2) node[midway]{ $\mathcal{R}_{{K}}$};
 \draw[thick] (8.5-1.25+2,-7)--++(-1,0)--+(0,1);
 \draw[thick,fill=gray!15] (7-.75+2,-6)--++(.75,0.5)--++(-1.5,0)--++(.75,-.5)--++(0,-.1);
 \draw[<-,thick] (7.25,-5.5)--++(-5.25-0.5,14.5);

 \draw[thick,fill=brown!40] (-1.25-1.25+.5+2,-7) rectangle ++(2,14);

 \node[] at (1,-8) { Blockage};
 \end{tikzpicture}
 \caption{Multi-user wireless powered relay/IRS system based on TS scheme with blocked direct path.}
 \label{ht}
 \centering
\end{figure}
We consider a multi-carrier wireless powered relay/IRS system with $K$ user pairs $\{(\mathcal{T}_k,\mathcal{R}_k)\}_{k=1}^K$ as shown in Fig.~\ref{ht}, where the direct links between the transmitters and receivers are likely blocked (see~\cite{shin2016mimo} and~\cite{jiang2022interference,bafghi2020degrees} for similar models of multiple user pairs with blocked direct path for relay and IRS systems, respectively).  The single-antenna transmitter $\mathcal{T}_k$ communicates with its receiver $\mathcal{R}_k$ through either an AF relay with $M_{\mathrm{R}}$ antennas or an active IRS with $M_{\mathrm{IRS}}$ REs. We assume that $\mathcal{R}_k$ harvests a part of its required power, whereas $\mathcal{T}_k$ and the relay/IRS have no energy concern~\cite{rostampoor2017energy,lin2018source}.
In each time duration $T$, the relay/IRS helps  $\mathcal{R}_k$ not only harvest energy from the signal of all $\{\mathcal{T}_k\}_{k=1}^K$, but also decode the information from its corresponding transmitter $\mathcal{T}_k$ by using a joint TS scheme. Precisely, $\mathcal{T}_k$, relay/IRS, and $\mathcal{R}_k$ switch simultaneously at time $t=\tau$ from their energy delivery modes to their communication modes. We assume that all nodes are perfectly synchronized as shown in Fig.~\ref{kkj} for this switching~\cite{lee2018joint}. We consider a multi-carrier system with a total bandwidth of $B_t$ equally divided into ${N}$ orthogonal subbands. We also model all channels to have a frequency-selective block fading, i.e., the channel coefficients remain constant for at least $T$ seconds.
Let the complex random matrices $\mathbf{H}^{{\mathrm{R}}}_{n} \in \complexC^{{M_{\mathrm{R}}} \times {K}}$ and $\mathbf{G}^{{\mathrm{R}}}_{n} \in \complexC^{ {K} \times {M_{\mathrm{R}}}}$ denote the channels from transmitters to the relay and the channels from the relay to the receivers for $n^{th}$ subband, respectively.
The elements of $\mathbf{H}^{{\mathrm{R}}}_n$ and $\mathbf{G}^{{\mathrm{R}}}_n$ are zero mean CSCG random variables in the case of Rayleigh fading. Similarly, we denote the channels from transmitters to the IRS and the channels from the IRS to the receivers by $\mathbf{H}^{{\mathrm{IRS}}}_{n} \in \complexC^{{M_{\mathrm{IRS}}} \times {K}} $ and $\mathbf{G}^{{\mathrm{IRS}}}_{n} \in \complexC^{ {K} \times {M_{\mathrm{IRS}}}} $, respectively. In the sequel, $\mathbf{H}_{n}$ and $\mathbf{G}_{n}$ refer to either $\mathbf{H}^{{\mathrm{R}}}_{n}$ and $\mathbf{G}^{{\mathrm{R}}}_{n}$ or $\mathbf{H}^{{\mathrm{IRS}}}_{n}$ and $\mathbf{G}^{{\mathrm{IRS}}}_{n}$, depending on the case under discussion. In addition, we assume that we can control the relay and IRS by collecting and using the CSI of all links \cite{jiang2022interference,bafghi2020degrees,cheng2015degrees}.
For example, a relay itself can act as a controller. The CSI may be estimated in various ways, e.g., by using orthogonal pilot sequences ( see~\cite{wang2020channel,kudathanthirige2017massive} for more details). The CSI estimation is out of the scope of this paper.
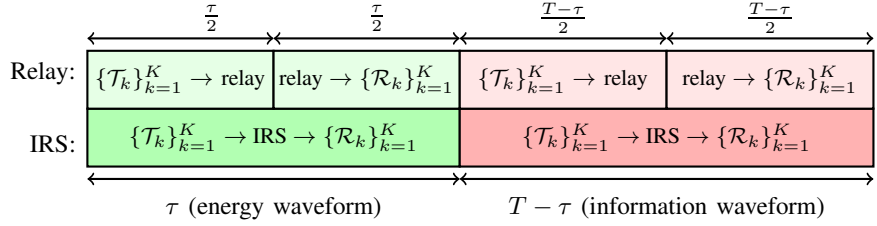
\begin{figure}
 \centering{
 \begin{tikzpicture}[scale=1.1,every node/.style={scale=1.05}]
 \draw[<->,thick] (-.5,.85)--++(2.25,0);
 \node [left] at (-.5,.425) {\footnotesize Relay:};
\node [left] at (-.5,-.425) {\footnotesize IRS:};
 \draw[<->,thick] (-.5,-.85)--++(4.5,0);
 \draw[<->,thick] (1.75,.85)--++(2.25,0);
 \draw[<->,thick] (4,.85)--++(2.5,0);
 \draw[<->,thick] (4,-.85)--++(5,0);
 \draw[<->,thick] (6.5,.85)--++(2.5,0);
 \draw[thick,fill=green!30] (-.5,-.7) rectangle ++(4.5,.7) node[midway]{\scriptsize $\{\mathcal{T}_k\}_{k=1}^K\rightarrow\hspace{2pt}$IRS$\hspace{2pt}\rightarrow \{\mathcal{R}_k\}_{k=1}^K$} ;
 \draw[thick,fill=green!10] (-.5,0) rectangle ++(2.25,.7) node[midway]{\scriptsize $\{\mathcal{T}_k\}_{k=1}^K \rightarrow$\hspace{3pt}relay};
 \draw[thick,fill=green!10] (1.75,0) rectangle ++(2.25,.7) node[midway]{\scriptsize relay\hspace{2pt}$\rightarrow\{\mathcal{R}_k\}_{k=1}^K$} ;
 \draw[thick,fill=red!30] (4,-.7) rectangle ++(5,.7) node[midway]{\scriptsize $\{\mathcal{T}_k\}_{k=1}^K \rightarrow\hspace{2pt}$IRS$\hspace{2pt}\rightarrow \{\mathcal{R}_k\}_{k=1}^K$} ;
 \draw[thick,fill=red!10] (4,0) rectangle ++(2.5,.7) node[midway]{\scriptsize $\{\mathcal{T}_k\}_{k=1}^K \rightarrow$\hspace{3pt}relay};
 \draw[thick,fill=red!10] (6.5,0) rectangle ++(2.5,.7) node[midway]{\scriptsize relay$\hspace{2pt}\rightarrow \{\mathcal{R}_k\}_{k=1}^K$};
 \node [] at (1,1.1) {\footnotesize $\frac{\tau}{2}$};
 \node [] at (1.75,-1.2) {\footnotesize $\tau$ (energy waveform)};
 \node [] at (3,1.1) {\footnotesize $\frac{\tau}{2}$};
 \node [] at (5.25,1.1) {\footnotesize $\frac{T-\tau}{2}$};
 \node [] at (6.5,-1.2) {\footnotesize $T- \tau$ (information waveform)};
 \node [] at (7.75,1.1) {\footnotesize $\frac{T-\tau}{2}$};
 \end{tikzpicture}}
 \caption{The transmission, amplification/reflection, and reception timeline for the proposed relay/IRS model.}
 \label{kkj}
 \centering
\end{figure}
Also, we propose two low-complexity implementation methods mentioned in Remark~\ref{hh}  to reduce the signaling overhead in the controller node.

Each $\mathcal{T}_k$ transmits a multi-sine energy waveform $x_{\mathrm{E},k}(t)$ and a multi-carrier modulated information waveform $x_{\mathrm{I},k}(t)$ to the relay/IRS during the first-hop transmission at the EH and ID time slots, respectively, as follows
\begin{align} \label{eq1}
x_{\mathrm{E},k}(t) &= \sum_{n=1}^{{N}} a_{\mathrm{E},k,n} \textrm{cos} (2 \pi f_n t + \phi_{\mathrm{E},k,n}),
=\Re \left \{ \sum_{n=1}^{{N}} s_{\mathrm{E},k,n} e^{j2 \pi f_n t} \right \} ,
\\ 
\label{eq1mod}
x_{\mathrm{I},k}(t) &= \sum_{n=1}^{{N}} a_{\mathrm{I},k,n}(\tau) \textrm{cos} (2 \pi f_n t + \phi_{\mathrm{I},k,n})=\Re \left \{ \sum_{n=1}^{{N}} s_{\mathrm{I},k,n} e^{j2 \pi f_n t} \right \},
\end{align}
where $s_{\mathrm{E},k,n}= a_{\mathrm{E},k,n} e^{j\phi_{\mathrm{E},k,n}}$ and $s_{\mathrm{I},k,n}= a_{\mathrm{I},k,n} e^{j\phi_{\mathrm{I},k,n}}$ are the baseband complex signal representations for the energy and information waveforms, respectively. We assume that the baseband information signals are i.i.d. CSCG random variable variables, i.e., $s_{\mathrm{I},k,n} \sim \mathcal{CN} (0, p_{\mathrm{I},k,n})$.
 The transmitted energy by $\mathcal{T}_k$  is constrained by
\begin{equation} \label{41}
 \frac{\tau}{2\rho}  \vert  s_{\mathrm{E},k,n} \vert^2  + \frac{T- \tau}{2\rho}  p_{\mathrm{I},k,n}  \leq T p^{\mathrm{rf}}_{{k},n} ,  \hspace{5pt} \forall k,n,
\end{equation}
where $p^{\mathrm{rf}}_{{k},n}$ is the maximum power budget at $\mathcal{T}_k$ for the $n^{th}$ subband and $\rho$ addresses both
$\rho_{\mathrm{R}}=2$ for relay and $ \rho_{\mathrm{IRS}}=1$ for IRS system according to the proposed timeline in Fig.~\ref{kkj} (see also Remark~1). By defining $\mathbf{s}_{\mathrm{E},n} =[s_{\mathrm{E},1,n} ,  \cdots , s_{\mathrm{E},{K},n}]^{T}$ and $\mathbf{s}_{\mathrm{I},n} =[s_{\mathrm{I},1,n} , \cdots , s_{\mathrm{I},{K},n}]^{T}$,
the received signal at the relay/IRS is expressed as
\begin{equation} \label{fg}
\mathbf{r}(t) =
\begin{cases}
\sum\limits_{n=1}^{{N}}\lbrace\mathbf{H}_n \mathbf{s}_{\mathrm{E},n} + \mathbf{z}_{n}\rbrace ,\ t \in T_{\mathrm{EH}}, \ \textrm{for EH}, \\
\sum\limits_{n=1}^{{N}} \lbrace\mathbf{H}_n \mathbf{s}_{\mathrm{I},n}+ \mathbf{z}_{n}\rbrace ,\  t \in T_{\mathrm{ID}}, \ \textrm{for ID},
\end{cases}
\end{equation}
where
\begin{align}
 T_{\mathrm{EH}}&=
 \begin{cases}
 0 \leq t \leq \frac{\tau}{2}, \ \textrm{for relay}, \\
 0 \leq t \leq {\tau}, \   \textrm{for IRS},
 \end{cases}
\hspace{25pt}
 T_{\mathrm{ID}}=
 \begin{cases}
 \tau \leq t \leq \tau + \frac{T- \tau}{2}, \  \textrm{for relay}, \\
 \tau \leq t \leq T, \  \textrm{for IRS},
 \end{cases}
\end{align}
and $\mathbf{r}$ denotes either $\mathbf{r}_{{\mathrm{R}}}$ or $\mathbf{r}_{{\mathrm{IRS}}}$. Furthermore, the AWGN $\mathbf{z}_{n}$ denotes either $\mathbf{z}^{{\mathrm{R}}}_{n}\sim \mathcal{CN} (0, \sigma^{2}_{{{\mathrm{R}}},n} \mathbf{I}_{M_{{\mathrm{R}}}})$ or $\mathbf{z}^{{\mathrm{IRS}}}_{n}\sim \mathcal{CN} (0, \sigma^{2}_{{{\mathrm{IRS},n}}} \mathbf{I}_{M_{{\mathrm{IRS}}}})$ for relaying or reflecting modes. In contrast to the passive IRS, an active IRS adds non-negligible noise (which is introduced by the active elements~\cite{zhang2021active,long2021active}); however, the added noise of an active IRS has considerably less impact compared to the relay noise (which is introduced by RF chains), i.e., $\sigma^{2}_{{{\mathrm{IRS},n}}} \leq \sigma^{2}_{{{\mathrm{R}}},n}$~\cite{bousquet20114}.

In the second-hop transmission, the relay/IRS amplifies the energy and information signals of $\mathcal{T}_k$ by amplification/reflection matrices and then forwards them to $\mathcal{R}_{k}$. For AF relay system, the amplification matrices is introduced as $\mathbf{U}^{\mathrm{R}}_{\mathrm{E},n}$ and $\mathbf{U}^{\mathrm{R}}_{\mathrm{I},n} \in {\complexC}^{{M_{\mathrm{R}}} \times M_{\mathrm{R}}}, \forall n$ for energy and information phases, respectively. In the case of IRS-aided system, the reflection matrices is defined as $\mathbf{U}^{\mathrm{IRS}}_{\mathrm{E}}=\textrm{Diag} (\hm{\theta}_\mathrm{E})$ and $\mathbf{U}^{\mathrm{IRS}}_{\mathrm{I}}=\textrm{Diag} (\hm{\theta}_\mathrm{I})$ for energy and information time slots, respectively, where $\hm{\theta}_\mathrm{E}=[\eta_{\mathrm{E},1} e^{j\theta_{\mathrm{E},1}}, \eta_{\mathrm{E},2} e^{j\theta_{\mathrm{E},2}},\cdots, \eta_{\mathrm{E},M_{\mathrm{IRS}}} e^{j\theta_{\mathrm{E},M_{\mathrm{IRS}}}}]^T$ and $\hm{\theta}_\mathrm{I}=[\eta_{\mathrm{I},1} e^{j\theta_{\mathrm{I},1}}, \eta_{\mathrm{I},2} e^{j\theta_{\mathrm{I},2}},\cdots, \eta_{\mathrm{I},M_{\mathrm{IRS}}} e^{j\theta_{\mathrm{I},M_{\mathrm{IRS}}}}]^T$ with $\eta_{\mathrm{E},m},\eta_{\mathrm{I},m} \geq 1$ and $\theta_{\mathrm{E},m},\theta_{\mathrm{I},m} \in [0, 2\pi]$ respectively denote the reflection amplitude and the phase shift at the $m^{th}$ RE\footnote{Note that passive and passive lossless IRS require $\eta_{\mathrm{E},m},\eta_{\mathrm{I},m} \in [0, 1]$ and $\eta_{\mathrm{E},m}=\eta_{\mathrm{I},m} =1$, respectively.}. 
\begin{remark} \label{kl}
An active IRS amplifies the signal without any significant delay. However, in an AF relay, the signal reception, amplification, and transmission at the RF chain cause a long delay. Therefore, in practice, the AF relay requires twice time compared to the active IRS for transmission one information symbol~\cite{zhang2021active}.
\end{remark}
We define $\mathbf{U}_{\mathrm{E},n}$ and $\mathbf{U}_{\mathrm{I},n}$ to address both $\mathbf{U}^{\mathrm{R}}_{\mathrm{E},n}$, $\mathbf{U}^{\mathrm{IRS}}_{\mathrm{E}}$ and $\mathbf{U}^{\mathrm{R}}_{\mathrm{I},n}$, $\mathbf{U}^{\mathrm{IRS}}_{\mathrm{I}}$, respectively. The forwarded signal by the relay/IRS is given by
\begin{equation*}
 \widetilde{\mathbf{r}}(t) \hspace{-3pt}=
 \begin{cases}
\sum_{n=1}^{{N}} \mathbf{U}_{\mathrm{E},n} \left ( \mathbf{H}_n \mathbf{s}_{\mathrm{E},n} + \mathbf{z}_{n} \right) , ~   t \in \widetilde{T}_{\mathrm{EH}}, ~  \textrm{for~EH} , \\
\sum_{n=1}^{{N}} \mathbf{U}_{\mathrm{I},n} \left ( \mathbf{H}_n \mathbf{s}_{\mathrm{I},n}+ \mathbf{z}_{n}  \right) , ~   t \in \widetilde{T}_{\mathrm{ID}}, ~ \textrm{for ~ID},
\end{cases}
\label{fg54}
\end{equation*}
where
\begin{align}
 \widetilde{T}_{\mathrm{EH}}&=
 \begin{cases}
 \frac{\tau}{2} \leq t \leq \tau, ~ \mbox{for relay}, \\
 0 \leq t \leq {\tau}, ~ \textrm{for~IRS},
 \end{cases}
 \hspace{25pt}
 \widetilde{T}_{\mathrm{ID}}=
 \begin{cases}
 \tau + \frac{T-\tau}{2} \leq t \leq T, ~ \mbox{for relay}, \\
 \tau \leq t \leq T, ~ \textrm{for~IRS},
 \end{cases}
\end{align}
and $\widetilde{\mathbf{r}}$ denotes either $\widetilde{\mathbf{r}}_{\mathrm{R}}$ or $\widetilde{\mathbf{r}}_{\mathrm{IRS}}$ for relay or IRS system, with a slight abuse of notation.

Then, the power of $\mathbf{\widetilde{r}}(t)$ from the relay/IRS is written as
\begin{equation} \label{newq1}
{\mathbb{E}\left [{\| \mathbf{\widetilde{r}}(t) \|} _{2}^{2} \right ] =\!\!
\begin{cases}
\frac{1}{2}  \sum\limits_{n=1}^{{N}} \left \lbrace \mathbf{s}^{H}_{\mathrm{E},n} {\mathbf{V}}_{\mathrm{E},n} \mathbf{s}_{\mathrm{E},n}
 + \sigma_{n}^{2} \textrm{tr}\left \{ \mathbf{U}_{\mathrm{E},n} \mathbf{U}^{H}_{\mathrm{E},n}  \right\} \right \rbrace, ~ \textrm{for~EH} , \\
\frac{1}{2}  \sum\limits_{n=1}^{{N}}   \left \lbrace \textrm{tr}\left \{  \mathbf{Q}_{\mathrm{I},n}  { \mathbf{V}}_{\mathrm{I},n} \right\} + \sigma_{n}^{2} \textrm{tr}\left \{ \mathbf{U}_{\mathrm{I},n} \mathbf{U}^{H}_{\mathrm{I},n} \right\} \right \rbrace,~ \textrm{for~ID},
\end{cases}}
\end{equation}
where $\sigma_{n}^{2}$ addresses both $\sigma_{\mathrm{R},n}^{2}$ and $\sigma_{\mathrm{IRS},n}^{2}$, $\mathbf{Q}_{\mathrm{I},n} = \textrm{Diag} (p_{\mathrm{I},1,n}, p_{\mathrm{I},2,n}, \cdots , p_{\mathrm{I},{K},n})$  and
\begin{equation} \label{keyVEtilde}
{ \mathbf{V}}_{\mathrm{E},n} = {\mathbf{H}}^{H}_{n} \mathbf{U}^{H}_{\mathrm{E},n} \mathbf{U}_{\mathrm{E},n} {\mathbf{H}}_n,  ~\forall n, ~~~ ~
{ \mathbf{V}}_{\mathrm{I},n} = {\mathbf{H}}^{H}_{n} \mathbf{U}^{H}_{\mathrm{I},n} \mathbf{U}_{\mathrm{I},n} {\mathbf{H}}_n,~ \forall n.
\end{equation}
Using~\eqref{newq1}, the total consumed energy is bounded at the relay/IRS in $t \in [0 , T]$ as
\begin{align} \label{newq}
\hspace{-12pt} \frac{\tau}{2\rho}   \big( \mathbf{s}^{H}_{\mathrm{E},n}   {\mathbf{V}}_{\mathrm{E},n}  \mathbf{s}_{\mathrm{E},n} +\sigma_{n}^{2} \textrm{tr}  \{ \mathbf{U}_{\mathrm{E},n} \mathbf{U}^{H}_{\mathrm{E},n}  \} \big)  + \frac{T- \tau}{2\rho}   \big( \textrm{tr}  \{  \mathbf{Q}_{\mathrm{I},n}  { \mathbf{V}}_{\mathrm{I},n}  \} +  \sigma_{n}^{2} \textrm{tr}  \{ \mathbf{U}_{\mathrm{I},n} \mathbf{U}^{H}_{\mathrm{I},n}  \} \big) \leq T  p^{\mathrm{rf}}_{n}, \forall n,
\end{align}
where $p^{\mathrm{rf}}_{n}$ denotes either the maximum power budget at the relay $p^{\mathrm{rf}}_{\mathrm{R},n}$ or IRS $p^{\mathrm{rf}}_{\mathrm{IRS}}$.
We can write received signal at $\mathcal{R}_k$ as
\begin{equation*} \label{bb} 
{{y}_{k} (t)=
\begin{cases}
\sum\limits_{n=1}^{{N}}  \left \lbrace  \mathbf{g}^{T}_{k,n} \mathbf{U}_{\mathrm{E},n} \left ( \mathbf{H}_n \mathbf{s}_{\mathrm{E},n} + \mathbf{z}_{n} \right)  + z_{k,n} \right \rbrace   ,~ t \in \widetilde{T}_{\mathrm{EH}},~ \forall k, ~ \textrm{for~EH} , \\
\sum\limits_{n=1}^{{N}} \left \lbrace    \mathbf{g}^{T}_{k,n}  \mathbf{U}_{\mathrm{I},n} \left ( \mathbf{H}_n \mathbf{s}_{\mathrm{I},n} + \mathbf{z}_{n}  \right)+ z_{k,n} + \widetilde{z}_{k,n} \right \rbrace  ,~  t \in \widetilde{T}_{\mathrm{ID}},~\forall k,~ \textrm{for~ID},
\end{cases}}
\end{equation*}
where $\mathbf{g}_{k,n}$ is the $k^{th}$ column vector of $\mathbf{G}^{T}_n$, and $z_{k,n}$ as well as $\widetilde{z}_{k,n}$ are the AWGN from the antenna  and baseband processing noises at $\mathcal{R}_{k}$, respectively, with $z_{k,n}\sim \mathcal{CN} (0, {\sigma}_{k,n}^{2})$ and $\widetilde{z}_{k,n} \sim \mathcal{CN} (0,{\delta}_{k,n}^{2})$.
The information signals at $\mathcal{R}_k$ corresponding to the $n^{th}$ subband can be expanded as
\begin{align} \label{bache}
{y}_{k,n} (t)=& \scalemath{1}{  {{\mathbf{g}}^{T}_{k,n} \mathbf{U}_{\mathrm{I},n} {\mathbf{h}}_{k,n} s_{\mathrm{I},k,n}}+ { {{\mathbf{g}}}^{T}_{k,n} \mathbf{U}_{\mathrm{I},n} \sum_{j=1,j\neq k}^{{K}} {\mathbf{h}}_{j,n} s_{\mathrm{I},j,n}  }}+    {\mathbf{g}}^{T}_{k,n} \mathbf{U}_{\mathrm{I},n} \mathbf{z}_{n}+z_{k,n}+\widetilde{z}_{k,n}, \hspace{.02cm} \forall k,n,
\end{align}
where ${\mathbf{h}}_{k,n}$ and ${\mathbf{g}}_{k,n}$ are the $k^{th}$ column vector of ${\mathbf{H}}_n$ and ${\mathbf{G}}^{T}_n$, respectively.
By defining $\mathbf{p}_{\mathrm{I},n}=[p_{\mathrm{I},1,n} , p_{\mathrm{I},2,n}, \cdots , p_{\mathrm{I},{K},n}]^{T}$, the SINR at the ID part for the $n^{th}$ subband is given by
\begin{align} \label{newqq}
&\scalemath{1}{\gamma_{k,n} (\mathbf{p}_{\mathrm{I},n}, \mathbf{U}_{\mathrm{I},n}) =\frac{ p_{\mathrm{I},k,n}  {\psi}_{k,k,n} }
{  \sum_{j=1,j \neq k}^{{K}} p_{\mathrm{I},j,n} {\psi}_{k,j,n}  +{\sigma}^2_{n}  {\widetilde{{\psi}}}_{k,n}   + {\delta}^2_{k,n}  +{\sigma}^2_{k,n}   },} \hspace{4pt} \forall k,n,
\end{align}
where
$
\psi_{k,j,n} =  {\mathbf{g}}^{T}_{k,n} \mathbf{U}_{\mathrm{I},n}  {\mathbf{h}}_{j,n} {\mathbf{h}}^{H}_{j,n} \mathbf{U}^{H}_{\mathrm{I},n}  {\mathbf{g}}^{*}_{k,n}$
and $
{\widetilde{{\psi}}}_{k,n} = {\mathbf{g}}^{T}_{k,n}  \mathbf{U}_{\mathrm{I},n} \mathbf{U}^{H}_{\mathrm{I},n}  {\mathbf{g}}^{*}_{k,n}$.
From Remark~\ref{kl}, we obtain the achievable rate at the $k^{th}$ pair as follows
\begin{align}\label{fg922}
\scalemath{1}{{R}_{k}\Big( \{ \mathbf{p}_{\mathrm{I},n} \} ^{{N}}_{n=1},\{ \mathbf{U}_{\mathrm{I},n}\} ^{{N}}_{n=1}, \tau \Big)= \frac{T-\tau}{\rho T} \sum_{n=1}^{{N}} \log_2 \Big(1 +   \gamma_{k,n} (\mathbf{p}_{\mathrm{I},n}, \mathbf{U}_{\mathrm{I},n}) \Big)}.
\end{align}
For the EH stream, we assume the noise power is negligible compared to the received signal power. We take into account the rectifier non-linearity by employing the results from~\cite{clerckx2018beneficial} where  the harvested energy at $\mathcal{R}_k$  is approximated by
\begin{equation}\label{fg9}
 {E}_{k}\Big( \{ \mathbf{s}_{\mathrm{E},n} \} ^{{N}}_{n=1},\{ \mathbf{U}_{\mathrm{E},n}\} ^{{N}}_{n=1}, \tau \Big)= \frac{\tau}{\rho}  \textrm{exp}\left( { \widetilde{a}    {\log}^2    p_{E,k} } \right) p^{  \widetilde{b}}_{E,k} \textrm{exp}{\widetilde{c}} ,~\forall k,
\end{equation}
where $\widetilde{a}$, $\widetilde{b}$, and $\widetilde{c}$ are the curve fitting constants and
$p_{\mathrm{E},k}$ is the average input power to $\mathcal{R}_k$'s harvester as
\begin{align} \label{newpow}
\scalemath{1}{p_{\mathrm{E},k} \Big( \{ \mathbf{s}_{\mathrm{E},n} \} ^{{N}}_{n=1},\{ \mathbf{U}_{\mathrm{E},n}\} ^{{N}}_{n=1}\Big) =\frac{1}{2} \sum_{n=1}^{{N}} \mathbf{s}^{H}_{\mathrm{E},n} {\hm{\Xi}}_{k,n} \mathbf{s}_{\mathrm{E},n},} \ \forall k,
\end{align}
with
\begin{align} \label{newpowmm}
{\hm{\Xi}}_{k,n} = {\mathbf{H}}^{H}_n \mathbf{U}^{H}_{\mathrm{E},n}  {\mathbf{g}}^{*}_{k,n} {\mathbf{g}}^{T}_{k,n}  \mathbf{U} _{\mathrm{E},n} {\mathbf{H}}_n, \  \forall k,n.
\end{align}
\begin{remark} \label{kl2}
Note that the reflection matrix cannot be designed separately for each subband in the IRS system, while, thanks to the RF chain circuits in a relay, the amplification matrix design is considered for each subband. We note that an active IRS is considerably less expensive than an AF relay. This is because an AF relay requires massive integrated circuits (including analog-to-digital/digital-to-analog converter, self-interference cancellation circuits, etc). The delay caused by RF chain processing of an AF relay contributes to latency, leads to lower transmission time, and requires more power for energy and information signals (see~\eqref{41} and~\eqref{newq}). Therefore, a relay-IRS trade-off exists in the system performance (see~\eqref{fg922} and~\eqref{fg9}).
\end{remark}
\begin{remark} \label{hh}
An approach with lower implementation complexity is considered in which only one amplification/reflection matrix needs to be designed for both energy and information time slots, called the t-static approach. Also, one can consider another approach with only one amplification matrix design in both time slots and all subbands, referred to as t-f-static in the relay system. These design methodologies lead to a lower signaling overhead and system performance.
\end{remark}
\section{The proposed Minimum Rate Maximization Method}\label{sum}
In this section, the aim is to maximize the minimum rate of the multi-user relay/IRS WPC system w.r.t. multi-sine energy waveforms $\mathbf{s}_{\mathrm{E},n}$, allocated power $\mathbf{p}_{\mathrm{I},n}$, amplification/reflection matrices $ \mathbf{U}_{\mathrm{E},n}, \mathbf{U}_{\mathrm{I},n}$, and the time allocation parameter $\tau$. The unified minimum rate maximization problem for both relay and IRS systems is cast as
\begin{align}\label{fg10}
	\max_{ \tau,  \lbrace  \mathbf{s}_{\mathrm{E},n} , \mathbf{p}_{\mathrm{I},n},   \mathbf{U}_{\mathrm{E},n} , \mathbf{U}_{\mathrm{I},n} \rbrace ^{{N}}_{n=1}  } &~ \displaystyle   \min_{1 \leq k \leq {\cal K}} ~~  {R}_{k} \\ \nonumber
	\mbox{s.t.}\;\;\;\;\;\;\;\;\;\;\;\;\;\;\;\;\;\;\;\;& \hspace{-50pt} \left \lbrace \tau,  \lbrace  \mathbf{s}_{\mathrm{E},n} , \mathbf{p}_{\mathrm{I},n},   \mathbf{U}_{\mathrm{E},n} , \mathbf{U}_{\mathrm{I},n} \rbrace ^{{N}}_{n=1} \right \rbrace \in \Omega,
\end{align}
where $\Omega = \Omega_0 \cap \Omega_{\mathrm{ind}}$ with
\begin{align}
\Omega_0=\Big\{& \textrm{C}_1:
 0 \leq \tau \leq T,  ~ \textrm{C}_2:\eqref{41}, \hspace{2pt}  p_{\mathrm{I},k,n}\geq 0, \hspace{1pt}  \forall k,n,~
   \textrm{C}_3:\eqref{newq},~ \textrm{C}_4: {E}_{k} \geq E_{\textrm{min},k} , \hspace{1pt}   \forall k \Big\},
\end{align}
\begin{eqnarray} \label{irs1}
\scalemath{1}{ \Omega_\mathrm{ind}= \begin{cases}
 \textrm{C}_{\mathrm{R}}: \mathbf{U}_{\mathrm{E},n},\mathbf{U}_{\mathrm{I},n} \in \complexC^{M_{\mathrm{R}} \times M_{\mathrm{R}}}, \  \forall n,  ~ \mbox{for relay},
 \\
  \textrm{C}_{\mathrm{IRS}}: \mathbf{U}_{\mathrm{E},n},\mathbf{U}_{\mathrm{I},n} \in \mathbb{D}^{M_{\mathrm{IRS}} \times M_{\mathrm{IRS}}}, \  \forall n, ~ \vert {\theta}_{\mathrm{E},m} \vert \geq 1, \   \vert {\theta}_{\mathrm{I},m} \vert \geq 1, \  \forall m,  ~ \textrm{for~IRS},
 \end{cases}}
\end{eqnarray}
and $E_{\textrm{min},k}$ in $\textrm{C}_4$ is the minimum required harvested energy for the $k^{th}$ user.

The problem in~\eqref{fg10} is non-convex due to the coupled design variables in the objective function and the constraints $\textrm{C}_2-\textrm{C}_4$ and $\textrm{C}_{\mathrm{IRS}}$.
To deal with this non-convex problem, we first solve the problem w.r.t. $\{ \mathbf{U}_{\mathrm{E},n}, \mathbf{U}_{\mathrm{I},n} \}$ for fixed $\{ \mathbf{s}_{\mathrm{E},n} , \mathbf{p}_{\mathrm{I},n}, \tau\}$, then optimize $\{ \mathbf{s}_{\mathrm{E},n}, \mathbf{p}_{\mathrm{I},n}\}$ for given $\{ \mathbf{U}_{\mathrm{E},n}, \mathbf{U}_{\mathrm{I},n}, \tau \}$, and finally, solve the problem w.r.t. $\tau$ via a closed-form solution. The procedure is repeated until convergence.
\subsection{Maximization over $\{ \mathbf{U}_{\mathrm{E},n} , \mathbf{U}_{\mathrm{I},n} \}$} \label{suma}
Here, we first consider the relay problem, and then the IRS problem is investigated.
\subsubsection{Relay System} \label{relaysub}
The problem in~\eqref{fg10} for fixed $\{\mathbf{s}_{\mathrm{E},n}, \mathbf{p}_{\mathrm{I},n}\}$ reduces to the following optimization
\begin{eqnarray} \label{neweq12}
	\max_{ {\lbrace \mathbf{U}_\mathrm{E}, \mathbf{U}_\mathrm{I} \rbrace}^{{N}}_{n=1}} & \displaystyle \min_{1 \leq k \leq {\cal K}} & \sum^{{N}}_{n=1}  {\log_2 \left(1 + \gamma_{k,n} (\mathbf{U}_{\mathrm{I},n})  \right)} \\ \nonumber
	\mbox{s.t.}\;\;&&
	\textrm{C}_3, ~\textrm{C}_4,
\end{eqnarray}
which is still a non-convex problem. To start solving the problem, first we need to reformulate the obtained expressions for the relay power constraint~\eqref{newq1}, SINR~\eqref{newqq}, and the input power of harvesters~\eqref{newpow} from Section~\ref{sys}. We can rewrite~\eqref{newq1} as (see Appendix \ref{app1} for the derivation)
\begin{equation} \label{bb2}
\scalemath{1}{ \mathbb{E} \left[{\| \mathbf{\widetilde{r}}(t) \|} _{2}^{2}\right] =
 \begin{cases}
 \frac{1}{2} \sum\limits_{n=1}^{{N}} \mathbf{u}^{H}_{\mathrm{E},n} \widetilde{\mathbf{A}}^{\mathrm{R}}_{\mathrm{E},n}   \mathbf{u}_{\mathrm{E},n}, ~ 0 \leq t \leq \tau,  ~ \textrm{for~EH}, \\
 \frac{1}{2} \sum\limits_{n=1}^{{N}}  \mathbf{u}^{H}_{\mathrm{I},n} \widetilde{\mathbf{A}}^{\mathrm{R}}_{\mathrm{I},n}   \mathbf{u}_{\mathrm{I},n}, ~\tau \leq t \leq T,  ~ \textrm{for~ID},
 \end{cases}}
\end{equation}
where $\mathbf{u}_{\mathrm{E},n} = \textrm{vec} ( \mathbf{U}_{\mathrm{E},n})$, $\mathbf{u}_{\mathrm{I},n} = \textrm{vec} ( \mathbf{U}_{\mathrm{I},n})$, and
\begin{equation} \label{nnh}
   \begin{array}{ll}
 \widetilde{\mathbf{A}}^{\mathrm{R}}_{\mathrm{E},n} =   {\left( {\mathbf{H}}_n \mathbf{s}_{\mathrm{E},n} \mathbf{s}^{H}_{\mathrm{E},n} {\mathbf{H}}^{H}_n \right)}^{T} \otimes \mathbf{I}_{{M_{\mathrm{R}}}} +\sigma_{n}^{2} \mathbf{I}_{M_{\mathrm{R}}^{2}},
~~~
 \widetilde{\mathbf{A}}^{\mathrm{R}}_{\mathrm{I},n} =   {\left( {\mathbf{H}}_n \mathbf{Q}_{\mathrm{I},n} {\mathbf{H}}^{H}_n \right)}^{T} \otimes \mathbf{I}_{{M_{\mathrm{R}}}} +\sigma_{n}^{2} \mathbf{I}_{M_{\mathrm{R}}^{2}}.  
\end{array}
\end{equation}
Therefore, we rewrite the relay power constraint in \eqref{newq} by using \eqref{bb2} as
\begin{equation} \label{bb3}
 \frac{\tau}{2\rho}  \mathbf{u}^{H}_{\mathrm{E},n}
 \widetilde{\mathbf{A}}^{\mathrm{R}}_{\mathrm{E},n}  \mathbf{u}_{\mathrm{E},n} + \frac{T- \tau}{2 \rho}  \mathbf{u}^{H}_{\mathrm{I},n}
 \widetilde{\mathbf{A}}^{\mathrm{R}}_{\mathrm{I},n}  \mathbf{u}_{\mathrm{I},n} \leq T p^{\mathrm{rf}}_{{\mathrm{R}},n},\hspace{5pt} \forall n.
\end{equation}
Next, we rewrite the SINR and the input power at $\mathcal{R}_k$'s harvester in~\eqref{newqq} and~\eqref{newpow} as
\begin{align} \label{newqq1}
 \gamma_{k,n} (\mathbf{u}_{\mathrm{I},n})
 =\frac{\mathbf{u}^{H}_{\mathrm{I},n} {\mathbf{A}}^{\mathrm{R}}_{k,n} \mathbf{u}_{\mathrm{I},n}}
 { \mathbf{u}^{H}_{\mathrm{I},n} \widehat{\mathbf{A}}^{\mathrm{R}}_{k,n} \mathbf{u}_{\mathrm{I},n} +{\delta}^2_{k,n}  +{\sigma}^2_{k,n}   },\hspace{5pt} \forall k,n,
\\ \label{fg9na}
 p_{\mathrm{E},k} \Big(\{ \mathbf{u}_{\mathrm{E},n}\} ^{{N}}_{n=1}\Big)= \frac{1}{2} \sum_{n=1}^{{N}} \mathbf{u}^{H}_{\mathrm{E},n} \bar{\mathbf{A}}^{\mathrm{R}}_{k,n} \mathbf{u}_{\mathrm{E},n},\hspace{5pt} \forall k,
\end{align}
where
\begin{align} \label{keyA}
 {\mathbf{A}}^{\mathrm{R}}_{k,n} &=   p_{\mathrm{I},k,n}   \left( {\mathbf{h}}_{k,n} {\mathbf{h}}^{H}_{k,n} \right)^{T} \otimes {\mathbf{g}}^{*}_{k,n} {\mathbf{g}}^{T}_{k,n}   ,
\\ \label{keyAhat}
 \widehat{\mathbf{A}}^{\mathrm{R}}_{k,n}  &\scalemath{1}{= \sum_{j=1,j \neq k}^{{K}}  p_{\mathrm{I},j,n}     \left(  {\mathbf{h}}_{j,n} {\mathbf{h}}^{H}_{j,n} \right)^T  \otimes  {\mathbf{g}}^{*}_{k,n} {\mathbf{g}}^{T}_{k,n} + {\sigma}^2_{n}    \mathbf{I}_{{M_{\mathrm{R}}}} \otimes  {\mathbf{g}}^{*}_{k,n} {\mathbf{g}}^{T}_{k,n}      ,}
\\ \label{keyAbar}
 \bar{\mathbf{A}}^{\mathrm{R}}_{k,n} &=  {\left( {\mathbf{H}}_n \mathbf{s}_{\mathrm{E},n}  \mathbf{s}^{H}_{\mathrm{E},n} {\mathbf{H}}^{H}_n    \right)}^{T} \otimes   {\mathbf{g}}^{*}_{k,n} {\mathbf{g}}^{T}_{k,n}.
\end{align}
By using \eqref{bb3}, \eqref{newqq1}, and \eqref{fg9na} with an auxiliary variable $\alpha_{a}$ the optimization problem in \eqref{neweq12} can be equivalently rewritten as
\begin{align} \label{maxmin4}
& \max_{\alpha_{a},{\{ \mathbf{u}_\mathrm{E}, \mathbf{u}_\mathrm{I} \}}^{{N}}_{n=1}}  \alpha_{a} \\ \nonumber
 & \mbox{s.t.}~~~ \textrm{C}_3:\eqref{bb3},~\textrm{C}_4:{{E}}_{k}\left( \{ \mathbf{u}_{\mathrm{E},n}\} ^{{N}}_{n=1}\right) \geq {E_{\textrm{min},k}} ,  \hspace{2pt} \forall k,
  \\ \nonumber&  ~~~~~~\hspace{2pt}\textrm{C}_5:  \sum^{{N}}_{n=1} {\log_2 \left(1 + \frac{ \mathbf{u}^{H}_{\mathrm{I},n} {\mathbf{A}}^{\mathrm{R}}_{k,n} \mathbf{u}_{\mathrm{I},n}  }
   {\mathbf{u}^{H}_{\mathrm{I},n} \widehat{\mathbf{A}}^{\mathrm{R}}_{k,n} \mathbf{u}_{\mathrm{I},n} +  {\zeta}_{k,n,a}} \right)}   \geq \alpha_{a},  \hspace{2pt} \forall k,
\end{align}
where ${\zeta}_{k,n,a} = {\sigma}^2_{k,n}+{\delta}^2_{k,n}$.
The constraint $\textrm{C}_5$ can be equivalently rewritten as
\begin{align} \label{fg101n}
	\textrm{C}_5:\sum^{{N}}_{n=1}  \Big \lbrace \log_2 \left(\mathbf{u}^H_{\mathrm{I},n} \mathbf{B}_{k,n} \mathbf{u}_{\mathrm{I},n} +{\zeta}_{k,n,a} \right) - \log_2 \left( \mathbf{u}^H_{\mathrm{I},n} \widehat{\mathbf{A}}^{\mathrm{R}}_{k,n} \mathbf{u}_{\mathrm{I},n}+{\zeta}_{k,n,a} \right)  \Big \rbrace \geq {\alpha_{a}},
\end{align}
where $\mathbf{B}_{k,n} = \widehat{\mathbf{A}}^{\mathrm{R}}_{k,n} +  {\mathbf{A}}^{\mathrm{R}}_{k,n} $. It is observed that this constraint is non-convex. Therefore, we employ the MM technique to tackle its non-convexity.
Precisely,  we minorize  the denominator term $\scalemath{1} {- \log_2 \Big( \mathbf{u}^H_{\mathrm{I},n} \widehat{\mathbf{A}}^{\mathrm{R}}_{k,n} \mathbf{u}_{\mathrm{I},n}}$ $\scalemath{.93} {+{\zeta}_{k,n,b} \Big) }$  by the using the following inequality
\begin{equation}\label{fg103}
 \log_2 (x) \leq \log_2 (x_0) + \frac{\log_2 e}{x_0} (x - x_0).
\end{equation}
By setting $x=  \mathbf{u}^H_{\mathrm{I},n} \widehat{\mathbf{A}}^{\mathrm{R}}_{k,n} \mathbf{u}_{\mathrm{I},n}+{\zeta}_{k,n,a} $ and $ x_0=  {\left(\mathbf{u}^{0}_{\mathrm{I},n}\right)}^{H} \widehat{\mathbf{A}}^{\mathrm{R}}_{k,n} \mathbf{u}^0_{\mathrm{I},n} +{\zeta}_{k,n,a}$ in \eqref{fg103} we obtain
\begin{align}
 - \log_2 \left( \mathbf{u}^H_{\mathrm{I},n} \widehat{\mathbf{A}}^{\mathrm{R}}_{k,n} \mathbf{u}_{\mathrm{I},n} +{\zeta}_{k,n,a} \right)   \geq & - \log_2 \left( \left(\mathbf{u}^{0}_{\mathrm{I},n}\right)^H \widehat{\mathbf{A}}^{\mathrm{R}}_{k,n} \mathbf{u}^0_{\mathrm{I},n} +{\zeta}_{k,n,a} \right) \\ \nonumber &- \frac{\log_2 e  \hspace{5pt} \Big(  \mathbf{u}^H_{\mathrm{I},n} \widehat{\mathbf{A}}^{\mathrm{R}}_{k,n} \mathbf{u}_{\mathrm{I},n}  - \left( \mathbf{u}^{0}_{\mathrm{I},n}\right)^H \widehat{\mathbf{A}}^{\mathrm{R}}_{k,n} \mathbf{u}^0_{\mathrm{I},n} \Big)}{ \left( \mathbf{u}^{0}_{\mathrm{I},n}\right)^H \widehat{\mathbf{A}}^{\mathrm{R}}_{k,n} \mathbf{u}^0_{\mathrm{I},n} +{\zeta}_{k,n,a}}  .
\end{align}
Applying the above minorizer, the constraint $\textrm{C}_5$ in~\eqref{fg101n} is rewritten
at the $ i^{th}$ iteration of the MM technique as
\begin{align}\label{maxmin4n}
	\sum^{{N}}_{n=1} \bigg \lbrace& \log_2 \left(\mathbf{u}^H_{\mathrm{I},n} \mathbf{B}_{k,n} \mathbf{u}_{\mathrm{I},n} +{\zeta}_{k,n,a} \right) -
	\log_2 \left( \left(\mathbf{u}^{(i-1)}_{\mathrm{I},n}\right)^{H} \widehat{\mathbf{A}}^{\mathrm{R}}_{k,n} \mathbf{u}^{(i-1)}_{\mathrm{I},n}+{\zeta}_{k,n,a} \right)  \\ \nonumber & - \frac{\log_2 e}{\left(\mathbf{u}^{(i-1)}_{\mathrm{I},n}\right)^{H} \widehat{\mathbf{A}}^{\mathrm{R}}_{k,n} \mathbf{u}^{(i-1)}_{\mathrm{I},n} +{\zeta}_{k,n,a}} \hspace{2pt}  \left(  \mathbf{u}^H_{\mathrm{I},n} \widehat{\mathbf{A}}^{\mathrm{R}}_{k,n} \mathbf{u}_{\mathrm{I},n} - \left(\mathbf{u}^{(i-1)}_{\mathrm{I},n}\right)^{H} \widehat{\mathbf{A}}^{\mathrm{R}}_{k,n} \mathbf{u}^{(i-1)}_{\mathrm{I},n} \right) \bigg \rbrace
	\geq {\alpha_a}.
\end{align}
The following lemma lays the ground for dealing with the first non-concave logarithmic term in~\eqref{maxmin4n} in light of the MM technique.
\begin{lemma} \label{fmn}
 Let $s(\mathbf{x})= - \log_2 \left( \mathbf{x}^H \mathbf{T} \mathbf{x} + \nu \right)$ and $\mathbf{x}^H \mathbf{Q} \mathbf{x} \leq P$ for any positive-definite matrices $ \mathbf{T}, \mathbf{Q}\in \mathbb{S}_{++}^{N}$ and $P \in \realR_{+}$. Then, $s(\mathbf{x})$ is bounded for all $\mathbf{x} $ and $ \mathbf{x}_0$ as follows
 \begin{align*}
 s(\mathbf{x}) \leq \hspace{2pt}& s(\mathbf{x}_0) + \Re \left \{ \mathbf{b}^H (\mathbf{x} - \mathbf{x}_0) \right \}
 + (\mathbf{x} - \mathbf{x}_0)^H  \mathbf{D} (\mathbf{x} - \mathbf{x}_0),
 \end{align*}
where $\mathbf{b}=  \frac{-2\log_2 e}{\mathbf{x}^{H}_{0} \mathbf{T} \mathbf{x}_0  + \nu}   \mathbf{T} \mathbf{x}_0$, $\mathbf{D}=  \frac{4P}{\mathbf{w}^{H}_{1} \mathbf{Q} \mathbf{w}_1 }   \mathbf{I}_{{{M^2_{\mathrm{R}}}}}$,
 and $\mathbf{w}_1$ is the principal eigenvector of $\mathbf{T}$ and $\epsilon > 0$.
\end{lemma}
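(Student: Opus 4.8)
The statement is a second-order \emph{majorization} (surrogate) result: because $(\mathbf{x}-\mathbf{x}_0)^H\mathbf{D}(\mathbf{x}-\mathbf{x}_0)\ge 0$, the right-hand side is a convex quadratic that touches $s$ at $\mathbf{x}_0$ and lies above it everywhere on the feasible set, which is exactly what the MM step in \eqref{maxmin4n} requires---majorizing $s(\mathbf{x})=-\log_2(\mathbf{x}^H\mathbf{T}\mathbf{x}+\nu)$ is the same as minorizing $\log_2(\mathbf{x}^H\mathbf{T}\mathbf{x}+\nu)$ by a \emph{concave} quadratic. The plan is to construct this surrogate from the second-order Taylor expansion of $s$ together with an explicit uniform bound on its curvature, in the spirit of the quadratic-upper-bound principle of the MM framework.

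First I would compute the gradient and Hessian of $s$, treating it as a real-valued function of the stacked real/imaginary parts of $\mathbf{x}$ (equivalently, via Wirtinger derivatives). Writing $g(\mathbf{x})=\mathbf{x}^H\mathbf{T}\mathbf{x}+\nu$ and $s=-(\log_2 e)\ln g$, the gradient is proportional to $\mathbf{T}\mathbf{x}/g$; evaluated at $\mathbf{x}_0$ this reproduces \emph{exactly} the stated linear coefficient $\mathbf{b}=\tfrac{-2\log_2 e}{\mathbf{x}_0^H\mathbf{T}\mathbf{x}_0+\nu}\mathbf{T}\mathbf{x}_0$, with the operator $\Re\{\cdot\}$ emerging from the real inner product and with Hermitianity of $\mathbf{T}$ fixing the normalization (the factor $2$). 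This checks the first-order part of the bound and confirms that the real/complex correspondence is being handled consistently.

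The heart of the argument is the Hessian. A direct computation in the real representation gives $\nabla^2 s=-\tfrac{2\log_2 e}{g}\,\widetilde{\mathbf{T}}+\tfrac{4\log_2 e}{g^{2}}\,\widetilde{\mathbf{T}}\widetilde{\mathbf{x}}\widetilde{\mathbf{x}}^{T}\widetilde{\mathbf{T}}$, where $\widetilde{\mathbf{T}}\succeq 0$ is the real form of $\mathbf{T}$. The first term is negative semidefinite and can be discarded, so $\nabla^2 s$ is dominated by the rank-one positive part, whose spectral norm is $\tfrac{4\log_2 e}{g^{2}}\|\widetilde{\mathbf{T}}\widetilde{\mathbf{x}}\|^{2}=\tfrac{4\log_2 e}{g^{2}}\,\mathbf{x}^H\mathbf{T}^2\mathbf{x}$. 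To obtain a \emph{constant} curvature matrix $\mathbf{D}=c\,\mathbf{I}$ valid for all $\mathbf{x},\mathbf{x}_0$, I would bound this quantity uniformly over the feasible region $\{\mathbf{x}:\mathbf{x}^H\mathbf{Q}\mathbf{x}\le P\}$. Using $\mathbf{T}^2\preceq\lambda_{\max}(\mathbf{T})\,\mathbf{T}$, the worst-case direction aligns with the principal eigenvector $\mathbf{w}_1$ of $\mathbf{T}$, and the power budget $\mathbf{x}^H\mathbf{Q}\mathbf{x}\le P$ caps the admissible magnitude along $\mathbf{w}_1$ through $\mathbf{w}_1^H\mathbf{Q}\mathbf{w}_1$; this is precisely how $P$ and $\mathbf{w}_1^H\mathbf{Q}\mathbf{w}_1$ enter the stated $\mathbf{D}=\tfrac{4P}{\mathbf{w}_1^H\mathbf{Q}\mathbf{w}_1}\mathbf{I}$. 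Once $\nabla^2 s\preceq 2\mathbf{D}$ is established on the feasible set, the claimed inequality follows from the integral form of Taylor's theorem, since the segment $[\mathbf{x}_0,\mathbf{x}]$ remains in the convex feasible set and the remainder integrates to $\tfrac12(\mathbf{x}-\mathbf{x}_0)^H(2\mathbf{D})(\mathbf{x}-\mathbf{x}_0)$.

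I expect the main obstacle to be this uniform curvature bound. The factor $g^{-2}$ makes $\mathbf{x}^H\mathbf{T}^2\mathbf{x}/g^{2}$ behave non-monotonically in $\|\mathbf{x}\|$ (it is largest only when $\mathbf{x}^H\mathbf{T}\mathbf{x}$ is comparable to $\nu$, not as $\|\mathbf{x}\|\to\infty$), so the estimate is not merely ``largest eigenvalue times largest feasible norm.'' Carefully combining $\mathbf{T}^2\preceq\lambda_{\max}(\mathbf{T})\,\mathbf{T}$, the elementary inequality $g^{2}\ge 4\nu\,\mathbf{x}^H\mathbf{T}\mathbf{x}$, and the constraint $\mathbf{x}^H\mathbf{Q}\mathbf{x}\le P$ to land on the exact constant in $\mathbf{D}$---and to clarify the role of the auxiliary $\epsilon>0$, which serves to keep $g$ bounded away from zero---is the step needing the most care; the first-order term and the final Taylor assembly are then routine bookkeeping.
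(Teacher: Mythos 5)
Your first two paragraphs track the paper's proof almost line for line: the paper also starts from the quadratic majorizer $s(\mathbf{x}) \le s(\mathbf{x}_0) + \Re\{(\nabla_{\mathbf{x}} s)^H|_{\mathbf{x}_0}(\mathbf{x}-\mathbf{x}_0)\} + (\mathbf{x}-\mathbf{x}_0)^H\mathbf{D}(\mathbf{x}-\mathbf{x}_0)$ valid for any $\mathbf{D} \succeq \nabla^2_{\mathbf{x}} s$ (citing \cite{sumrate}), computes the same gradient $\frac{-2\log_2 e}{\mathbf{x}^H\mathbf{T}\mathbf{x}+\nu}\mathbf{T}\mathbf{x}$ and the same Hessian, and likewise discards the negative semidefinite term $\frac{-2\mathbf{T}}{\mathbf{x}^H\mathbf{T}\mathbf{x}+\nu}\log_2 e$. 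The gap is exactly at the step you flag as needing "the most care," and the ingredients you propose cannot close it. Combining $\mathbf{T}^2 \preceq \lambda_{\max}(\mathbf{T})\,\mathbf{T}$ with the AM--GM bound $g^2 \ge 4\nu\,\mathbf{x}^H\mathbf{T}\mathbf{x}$ yields a curvature constant of the form $\frac{\lambda_{\max}(\mathbf{T})\log_2 e}{\nu}$, which depends on $\nu$ and on nothing else; no combination of these inequalities with $\mathbf{x}^H\mathbf{Q}\mathbf{x} \le P$ can "land on" $\mathbf{D} = \frac{4P}{\mathbf{w}_1^H\mathbf{Q}\mathbf{w}_1}\mathbf{I}$, which contains $P$ and $\mathbf{Q}$ but no $\nu$. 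Moreover, your heuristic that the power budget "caps the admissible magnitude along $\mathbf{w}_1$" points the wrong way: $\mathbf{x}^H\mathbf{Q}\mathbf{x}\le P$ is an \emph{upper} bound on $\|\mathbf{x}\|$ and thus controls the numerator $\mathbf{x}^H\mathbf{T}^2\mathbf{x}$, whereas the ratio $\frac{\mathbf{x}^H\mathbf{T}^2\mathbf{x}}{(\mathbf{x}^H\mathbf{T}\mathbf{x})^2}$ blows up as $\|\mathbf{x}\|\to 0$ --- a regime the power constraint does not exclude. You yourself observe this non-monotonicity, but your plan never resolves how $P$ and $\mathbf{w}_1^H\mathbf{Q}\mathbf{w}_1$ could legitimately enter, and as sketched it cannot.

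What the paper actually does at this juncture is structurally different and is missing from your proposal: it drops $\nu$ via $(\mathbf{x}^H\mathbf{T}\mathbf{x}+\nu)^{-2} \le (\mathbf{x}^H\mathbf{T}\mathbf{x})^{-2}$, reduces the problem to bounding $\phi = \max_{\mathbf{x}} \frac{\mathbf{x}^H\mathbf{T}^2\mathbf{x}}{(\mathbf{x}^H\mathbf{T}\mathbf{x})^2}$ (exploiting that $\mathbf{T}\mathbf{x}\mathbf{x}^H\mathbf{T}$ is rank one), factors $\mathbf{T} = \mathbf{V}\mathbf{V}^H$ with $\mathbf{V}$ full rank, changes variables $\mathbf{a} = \mathbf{V}^H\mathbf{x}$ so that $\phi = \max_{\mathbf{a}} \frac{\mathbf{a}^H\mathbf{V}^H\mathbf{V}\mathbf{a}}{(\mathbf{a}^H\mathbf{a})^2}$, and then brings in the constraint $\mathbf{x}^H\mathbf{Q}\mathbf{x}\le P$ through the procedure of \cite[Appendix B]{sumrate} to reach $\phi \le \frac{P\lambda_{\max}(\mathbf{T})}{\mathbf{v}_1^H\mathbf{V}^{-1}\mathbf{Q}\mathbf{V}^{-H}\mathbf{v}_1}$, which is then restated as the claimed $\mathbf{D}$. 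This factorization/change-of-variables step, together with the external constrained-maximization argument it leans on, is precisely what produces the lemma's constant; reproducing the stated $\mathbf{D}$ requires it, not AM--GM. (Your small-$\|\mathbf{x}\|$ worry is in fact well-founded once $\nu$ is discarded, which is why the paper must invoke the cited constrained argument rather than an unconstrained eigenvalue bound.) Two smaller corrections: your AM--GM surrogate would give a valid alternative majorizer for the MM step when $\nu = \zeta_{k,n,a} > 0$, but it proves a different inequality than the lemma states; and your reading of $\epsilon > 0$ as keeping $g$ away from zero is a guess --- in the paper $\epsilon$ appears in the lemma statement but is never used in the proof.
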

\begin{proof}See Appendix \ref{app4}.\end{proof}
 Using Lemma~\ref{fmn} and noting that the term $\frac{\tau}{2}  \mathbf{u}^{H}_{\mathrm{E},n} \widetilde{\mathbf{A}}^{\mathrm{R}}_{\mathrm{E},n}  \mathbf{u}_{\mathrm{E},n}$ in~\eqref{bb3} is positive, we obtain the following minorizer for the term $\log_2 (\mathbf{u}^H_{\mathrm{I},n} \mathbf{B}_{k,n} \mathbf{u}_{\mathrm{I},n} +{\zeta}_{k,n,a})$ in~\eqref{maxmin4n} at any given $\mathbf{u}^0_{\mathrm{I},n}$
\begin{align}\label{fg1o09}
 \log_2 (\mathbf{u}^H_{\mathrm{I},n} \mathbf{B}_{k,n} \mathbf{u}_{\mathrm{I},n} +{\zeta}_{k,n,a})  \geq & \log_2 \left( \left(\mathbf{u}^{0}_{\mathrm{I},n}\right)^H \mathbf{B}_{k,n} \mathbf{u}^0_{\mathrm{I},n}  +{\zeta}_{k,n,a} \right) - \Re \left \{ \mathbf{b}^{H}_{k,n} (\mathbf{u}_{\mathrm{I},n} - \mathbf{u}^0_{\mathrm{I},n})\right \}  \\ \nonumber & - \left(\mathbf{u}_{\mathrm{I},n} - \mathbf{u}^0_{\mathrm{I},n} \right)^H \mathbf{D}_{k,n} (\mathbf{u}_{\mathrm{I},n} - \mathbf{u}^0_{\mathrm{I},n}),
\end{align}
where
\begin{equation*}
	\mathbf{b}_{k,n}=  \frac{-2\log_2 e}{\left(\mathbf{u}^{0}_{\mathrm{I},n}\right) ^{H} \mathbf{B}_{k,n} \mathbf{u}^{0}_{\mathrm{I},n} +{\zeta}_{k,n,a}}   \mathbf{B}_{k,n} \mathbf{u}^{0}_{\mathrm{I},n},
	\hspace{30pt}
	\mathbf{D}_{k,n}=  \frac{\frac{16T}{T-\tau}p^{\mathrm{rf}}_{{\mathrm{R}},n}}{\widetilde{\mathbf{w}}^{H}_{k,n} \widetilde{\mathbf{A}}^{\mathrm{R}}_{\mathrm{I},n} \widetilde{\mathbf{w}}_{k,n} }   \mathbf{I}_{M^2_{\mathrm{R}}},
\end{equation*}
and $\widetilde{\mathbf{w}}_{k,n}$ denotes the principal eigenvector of $\mathbf{B}_{k,n}$.
Applying~\eqref{fg1o09}, the constraint in~\eqref{maxmin4n} is restated as
\begin{align}\label{maxmin5n}
\hspace{-5pt} \scalemath{.93}{- \sum^{{N}}_{n=1} \bigg \{   \frac{\log_2 e \  \mathbf{u}^H_{\mathrm{I},n} \widehat{\mathbf{A}}^{\mathrm{R}}_{k,n} \mathbf{u}_{\mathrm{I},n}}{\left(\mathbf{u}^{(i -1)}_{\mathrm{I},n}\right) ^{H} \widehat{\mathbf{A}}^{\mathrm{R}}_{k,n} \mathbf{u}^{(i -1)}_{\mathrm{I},n} +{\zeta}_{k,n,a}}  + \mathbf{u}^H_{\mathrm{I},n} \mathbf{D}_{k,n} \mathbf{u}_{\mathrm{I},n} +\Re \left \{       \left( \mathbf{b}_{k,n} -2\mathbf{D}_{k,n} \mathbf{u}^{(i -1)}_{\mathrm{I},n} \right)^H \mathbf{u}_{\mathrm{I},n} \right \}  + d^{(i )}_{k,n} \bigg \} \geq {\alpha_a}, \hspace{0pt}  \forall k,}
\end{align}
where
\begin{align}\label{maxmin7}
 d^{(i)}_{k,n} =\hspace{2pt}&\log_2  \frac{\left(\mathbf{u}^{(i -1)}_{\mathrm{I},n}\right)^H \widehat{\mathbf{A}}^{\mathrm{R}}_{k,n} \mathbf{u}^{(i -1)}_{\mathrm{I},n} +{\zeta}_{k,n,a}}{\left(\mathbf{u}^{(i -1)}_{\mathrm{I},n}\right)^H \mathbf{B}_{k,n} \mathbf{u}^{(i -1)}_{\mathrm{I},n}+{\zeta}_{k,n,a}}   - \Re \left \{  \mathbf{b}^{H}_{k,n} \mathbf{u}^{(i -1)}_{\mathrm{I},n} \right \}
 +\left(\mathbf{u}^{(i -1)}_{\mathrm{I},n}\right)^H \mathbf{D}_{k,n} \mathbf{u}^{(i -1)}_{\mathrm{I},n}\\ \nonumber & - \frac{\log_2 e \  \left(  \mathbf{u}^{(i-1)}_{\mathrm{I},n}\right)^{H} \widehat{\mathbf{A}}^{\mathrm{R}}_{k,n} \mathbf{u}^{(i-1)}_{\mathrm{I},n} }{\left(\mathbf{u}^{(i-1)}_{\mathrm{I},n}\right)^{H} \widehat{\mathbf{A}}^{\mathrm{R}}_{k,n} \mathbf{u}^{(i-1)}_{\mathrm{I},n} +{\zeta}_{k,n,a}}.
\end{align}
Then, we can simplify constraint in~\eqref{maxmin5n} as
\begin{align}\label{maxmin6}
 -\sum^{{N}}_{n=1} \Big \{& \mathbf{u}^H_{\mathrm{I},n} {\mathbf{F}}^{(i )}_{k,n} \mathbf{u}_{\mathrm{I},n} +\Re \left \{           ( {\mathbf{f}}^{(i )}_{k,n})^H \mathbf{u}_{\mathrm{I},n} \right \}+d^{(i)}_{k,n} \Big \} \geq {{\alpha_a}}, \   \forall k,
\end{align}
where 
\begin{equation}\label{max1F}
 {\mathbf{F}}^{(i)}_{k,n}= \frac{\log_2 e \hspace{4pt} \widehat{\mathbf{A}}^{\mathrm{R}}_{k,n}}{\left(\mathbf{u}^{(i -1)}_{\mathrm{I},n}\right) ^{H} \widehat{\mathbf{A}}^{\mathrm{R}}_{k,n} \mathbf{u}^{(i-1 )}_{\mathrm{I},n} +{\zeta}_{k,n,a}} + \mathbf{D}_{k,n},~~~~ {\mathbf{f}}^{(i)}_{k,n}=  \mathbf{b}_{k,n} - 2  \mathbf{D}_{k,n} \mathbf{u}^{(i -1)}_{\mathrm{I},n} .
\end{equation}
Finally, we focus on the constraint $\textrm{C}_4$. From \eqref{fg9} and \eqref{fg9na}, we see that in the left-hand side (LHS) of $\textrm{C}_4$, ${{E}}_{k}$ is neither convex nor concave w.r.t. $\mathbf{u}_{\mathrm{E},n}$.
To apply the MM technique on LHS of $\textrm{C}_4$, we first define a parameter\footnote{See Appendix~\ref{beta12} for a selection of ${\beta}_{k,n,a}$.} ${\beta}_{k,n,a}$ such that ${\nabla}^{2}_{{\mathbf{u}}_{\mathrm{E},n}} {{E}}_{k} \left( \lbrace \mathbf{u}_{\mathrm{E},n}\rbrace ^{{N}}_{n=1}\right) + {\beta}_{k,n,a} \mathbf{I}_{M^2_{\mathrm{R}}} \succeq \mathbf{0}, \hspace{3pt} \forall k,n,$ and write ${{E}}_{k}$ as the sum of a convex and a concave function as
\begin{align}\label{jj}
 {{E}}_{k} \left( \{ \mathbf{u}_{\mathrm{E},n}\} ^{{N}}_{n=1}\right) =& {{E}}_{k} \left( \{ \mathbf{u}_{\mathrm{E},n}\} ^{{N}}_{n=1}\right)  + \frac{1}{2}   \sum_{n=1}^{{N}}  {\beta}_{k,n,a} {\mathbf{u}^{{H}}_{\mathrm{E},n}} \mathbf{u}_{\mathrm{E},n}
- \frac{1}{2}  \sum_{n=1}^{{N}} \beta_{k,n,a} {\mathbf{u}^{{H}}_{\mathrm{E},n}} \mathbf{u}_{\mathrm{E},n},~\forall k.
\end{align}
We now apply the MM technique to $\textrm{C}_4$ and obtain a convex constraint. To do so, we keep the concave part and minorize the convex part of~\eqref{jj} and rewrite  $\textrm{C}_4$ as
\begin{align}\label{c4b}
 & {{E}}_{k} \left( \{ \mathbf{u}^{(i-1)}_{\mathrm{E},n}\} ^{{N}}_{n=1}\right)  + \frac{1}{2}   \sum_{n=1}^{{N}}   {{\beta}}_{k,n,a} {\left( \mathbf{u}^{{(i-1)}}_{\mathrm{E},n}\right)}^{H} \mathbf{u}^{(i-1)}_{\mathrm{E},n} + \sum_{n=1}^{{N}}    \Re  \left \{ {{\hm{\vartheta}}}^{(i)}_{k,n,a}  \left(\mathbf{u}_{\mathrm{E},n} - {\mathbf{u}}^{(i -1)}_{\mathrm{E},n} \right) \right \} \\ \nonumber &  - \frac{1}{2}   \sum_{n=1}^{{N}}  {{\beta}}_{k,n,a} {\mathbf{u}^{H}_{\mathrm{E},n}} \mathbf{u}_{\mathrm{E},n} \geq {E_{\textrm{min},k}}  ,  \hspace{5pt} \forall k,
\end{align}
where 
\begin{align*}
 {{\hm{\vartheta}}}^{(i)}_{k,n,a}  =& {\beta}_{k,n,a}   \left({\mathbf{u}_{\mathrm{E},n}^{(i-1)}}\right)^{{H}}  +  \frac{\tau \textrm{exp}{\widetilde{c}}}{2}\textrm{exp} \left({ \widetilde{a}  { {\log}^2  {\omega}_{k,a}^{(i)}    }  }\right) {\left(   {\omega}_{k,a}^{(i)}  \right) }^{\widetilde{b}-1}   \left({2\widetilde{a}  \log    {\omega}_{k,a}^{(i)} }  \hspace{2pt}    +\widetilde{b} \right)  \left({{\mathbf{u}}^{(i-1)}_{\mathrm{E},n}}\right)^{{H}} {\bar{\mathbf{A}}}^{\mathrm{R}}_{k,n},
\end{align*}
with ${\omega}_{k,a}^{(i)} = \frac{1}{2} \sum_{n=1}^{{N}} \left({{\mathbf{u}}^{(i-1)}_{\mathrm{E},n}}\right)^{{H}} {\bar{\mathbf{A}}}^{\mathrm{R}}_{k,n} {\mathbf{u}}^{(i-1)}_{\mathrm{E},n}$. Therefore, the $i^{th}$ MM iteration for~\eqref{neweq12} is the solution of the following convex problem 
\begin{align} \label{kkk12}
 &\max_{\alpha_a, \{ \mathbf{u}_{\mathrm{E},n}, \mathbf{u}_{\mathrm{I},n} \} ^{{N}}_{n=1}}   \displaystyle \alpha_a \\ \nonumber &
 \mbox{s.t.}~~ \textrm{C}_3:{\eqref{bb3}},~ \textrm{C}_4:\eqref{c4b}, ~ \textrm{C}_5:\eqref{maxmin6},
\end{align}
which can be solved efficiently.
\subsubsection{IRS System} \label{irssub}
By considering $\mathbf{U}_{\mathrm{E},n}=\textrm{Diag}({\hm{\theta}_\mathrm{E}})$, $\mathbf{U}_{\mathrm{I},n}=\textrm{Diag}({\hm{\theta}_\mathrm{I}})$ and adding the constraint $\textrm{C}_{\mathrm{IRS}}$ in~\eqref{irs1}, the optimization problem in~\eqref{neweq12} is considered in this subsection. Since $\mathbf{U}_{\mathrm{E},n}$ and $\mathbf{U}_{\mathrm{I},n}$ are diagonal matrices, the expressions in \eqref{bb3}-\eqref{fg9na} are modified as
\begin{equation} \label{bb31}
 \frac{\tau}{2}  \hm{\theta}^{H}_\mathrm{E}
 \widetilde{\mathbf{A}}^{\mathrm{IRS}}_{\mathrm{E},n}  \hm{\theta}_{E} + \frac{T- \tau}{2}  \hm{\theta}^{H}_\mathrm{I}
 \widetilde{\mathbf{A}}^{\mathrm{IRS}}_{\mathrm{I},n}  \hm{\theta}_\mathrm{I} \leq T p^{\mathrm{rf}}_{\mathrm{IRS}}, \hspace{.05cm} \forall n,
\end{equation}
\begin{align} \label{newqq11}
 \gamma_{k,n} (\hm{\theta}_\mathrm{I})
 =\frac{\hm{\theta}^{H}_\mathrm{I} {\mathbf{A}}^{\mathrm{IRS}}_{k,n} \hm{\theta}_\mathrm{I}}
 { \hm{\theta}^{H}_\mathrm{I} \widehat{\mathbf{A}}^{\mathrm{IRS}}_{k,n} \hm{\theta}_\mathrm{I} +{\delta}^2_{k,n}  +{\sigma}^2_{k,n}   },\hspace{5pt} \forall k,n,
\end{align}
\begin{align} \label{fg9na1}
 p_{E,k} \left(\hm{\theta}_\mathrm{E}\right)= \frac{1}{2} \sum_{n=1}^{{N}} \hm{\theta}^{H}_\mathrm{E} \bar{\mathbf{A}}^{\mathrm{IRS}}_{k,n} \hm{\theta}_\mathrm{E},\hspace{5pt} \forall k,
\end{align}
where their parameters are defined in Lemma~2 below.
\begin{lemma}
 The parameters $ \widetilde{\mathbf{A}}^{\mathrm{IRS}}_{\mathrm{E},n}$, $ \widetilde{\mathbf{A}}^{\mathrm{IRS}}_{\mathrm{I},n}$, ${\mathbf{A}}^{\mathrm{IRS}}_{k,n}$, $\widehat{\mathbf{A}}^{\mathrm{IRS}}_{k,n}$, and $\bar{\mathbf{A}}^{\mathrm{IRS}}_{k,n}$ are expressed as follows:
\begin{align} \label{nnh1}
 \widetilde{\mathbf{A}}^{\mathrm{IRS}}_{\mathrm{E},n} &=   {\left( {\mathbf{H}}_n \mathbf{s}_{\mathrm{E},n} \mathbf{s}^{H}_{\mathrm{E},n} {\mathbf{H}}^{H}_n \right)}^{T} \odot \mathbf{I}_{{M_{\mathrm{IRS}}}}  +\sigma_{n}^{2} \mathbf{I}_{{M_{\mathrm{IRS}}}},
  \\
 \widetilde{\mathbf{A}}^{\mathrm{IRS}}_{\mathrm{I},n} &=   {\left( {\mathbf{H}}_n \mathbf{Q}_{\mathrm{I},n} {\mathbf{H}}^{H}_n \right)}^{T} \odot \mathbf{I}_{{M_{\mathrm{IRS}}}}  +\sigma_{n}^{2} \mathbf{I}_{{M_{\mathrm{IRS}}}},
\\ \label{keyA1}
 {\mathbf{A}}^{\mathrm{IRS}}_{k,n} &=   p_{\mathrm{I},k,n}   \left( {\mathbf{h}}_{k,n} {\mathbf{h}}^{H}_{k,n} \right)^{T} \odot {\mathbf{g}}^{*}_{k,n} {\mathbf{g}}^{T}_{k,n}   ,
\\ \label{keyAhat1}
 \widehat{\mathbf{A}}^{\mathrm{IRS}}_{k,n}  &=\scalemath{1}{\sum_{j=1,j \neq k}^{{K}} p_{\mathrm{I},j,n}     \left(  {\mathbf{h}}_{j,n} {\mathbf{h}}^{H}_{j,n} \right)^{T}\!\!\!  \odot  {\mathbf{g}}^{*}_{k,n} {\mathbf{g}}^{T}_{k,n} \!\!+ {\sigma}^2_{n}    \mathbf{I}_{{M_{\mathrm{IRS}}}} \!  \odot {\mathbf{g}}^{*}_{k,n} {\mathbf{g}}^{T}_{k,n}      ,}
\\ \label{keyAbar1}
 \bar{\mathbf{A}}^{\mathrm{IRS}}_{k,n} &=  {\left( {\mathbf{H}}_n \mathbf{s}_{\mathrm{E},n}  \mathbf{s}^{H}_{\mathrm{E},n} {\mathbf{H}}^{H}_n    \right)}^{T} \odot   {\mathbf{g}}^{*}_{k,n} {\mathbf{g}}^{T}_{k,n}.
\end{align}
It is worth pointing out that the only difference between the parameters above and their corresponding expressions in~\eqref{nnh} and~\eqref{keyA}-\eqref{keyAbar}, is the symbol of multiplication, i.e., $\otimes$ and $\odot$, in a proper dimension. The proper dimension consideration means ${{M_{\mathrm{R}}}} \rightarrow {{M_{\mathrm{IRS}}}}$ for all of the above parameters and $\mathbf{I}_{{M^2_{\mathrm{R}}}} \rightarrow \mathbf{I}_{{M_{\mathrm{IRS}}}}$ for the second terms of $\widetilde{\mathbf{A}}^{\mathrm{IRS}}_{\mathrm{E},n}$ and $ \widetilde{\mathbf{A}}^{\mathrm{IRS}}_{\mathrm{I},n}$.
\end{lemma}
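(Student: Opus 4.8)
The plan is to re-derive each of the five IRS kernels directly from its defining physical quantity --- the forwarded power in~\eqref{newq1}, the SINR in~\eqref{newqq}, and the harvester input power in~\eqref{newpow} --- exactly as was done for the relay in Appendix~\ref{app1}, but now imposing the diagonal structure $\mathbf{U}_{\mathrm{E},n}=\textrm{Diag}(\hm{\theta}_\mathrm{E})$ and $\mathbf{U}_{\mathrm{I},n}=\textrm{Diag}(\hm{\theta}_\mathrm{I})$. The whole argument reduces to replacing the vectorization identity that drives the relay derivation by its diagonal-matrix counterpart, and then tracking how this single substitution propagates through every quadratic form.

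First I would record the three elementary identities that do all the work. For any vectors $\mathbf{a},\mathbf{b},\mathbf{c},\mathbf{d}\in\complexC^{M_{\mathrm{IRS}}}$ and any $\hm{\theta}\in\complexC^{M_{\mathrm{IRS}}}$,
\begin{align}
\textrm{Diag}(\hm{\theta})\,\mathbf{c} &= \textrm{Diag}(\mathbf{c})\,\hm{\theta}=\hm{\theta}\odot\mathbf{c}, \nonumber \\
\mathbf{a}^{T}\,\textrm{Diag}(\hm{\theta})\,\mathbf{b} &= (\mathbf{a}\odot\mathbf{b})^{T}\hm{\theta}, \nonumber \\
(\mathbf{a}\odot\mathbf{b})(\mathbf{c}\odot\mathbf{d})^{T} &= (\mathbf{a}\mathbf{c}^{T})\odot(\mathbf{b}\mathbf{d}^{T}). \nonumber
\end{align}
The last identity is the crux: it is the exact Hadamard analogue of the Kronecker mixed-product rule $(\mathbf{a}\otimes\mathbf{b})(\mathbf{c}\otimes\mathbf{d})^{T}=(\mathbf{a}\mathbf{c}^{T})\otimes(\mathbf{b}\mathbf{d}^{T})$ that generates the $\otimes$-structure in~\eqref{nnh} and~\eqref{keyA}--\eqref{keyAbar}, and it is the single place where $\otimes$ is swapped for $\odot$.

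Next I would apply these to each quantity in turn. For the forwarded-power terms, writing $\mathbf{c}=\mathbf{H}_n\mathbf{s}_{\mathrm{E},n}$ gives $\|\textrm{Diag}(\hm{\theta}_\mathrm{E})\mathbf{c}\|_2^2=\hm{\theta}_\mathrm{E}^{H}\,\textrm{Diag}(|\mathbf{c}|^2)\,\hm{\theta}_\mathrm{E}$ with $\textrm{Diag}(|\mathbf{c}|^2)=(\mathbf{c}\mathbf{c}^{H})^{T}\odot\mathbf{I}_{M_{\mathrm{IRS}}}$, while $\sigma_n^2\,\textrm{tr}\{\mathbf{U}_{\mathrm{E},n}\mathbf{U}^{H}_{\mathrm{E},n}\}=\sigma_n^2\,\hm{\theta}_\mathrm{E}^{H}\hm{\theta}_\mathrm{E}$; together these yield $\widetilde{\mathbf{A}}^{\mathrm{IRS}}_{\mathrm{E},n}$ in~\eqref{nnh1}, and the information counterpart follows identically with $\mathbf{Q}_{\mathrm{I},n}$ in place of $\mathbf{s}_{\mathrm{E},n}\mathbf{s}^{H}_{\mathrm{E},n}$. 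For the SINR I would expand the scalar $\mathbf{g}^{T}_{k,n}\textrm{Diag}(\hm{\theta}_\mathrm{I})\mathbf{h}_{j,n}=(\mathbf{g}_{k,n}\odot\mathbf{h}_{j,n})^{T}\hm{\theta}_\mathrm{I}$, take its squared modulus, and invoke the third identity with $\mathbf{a}=\mathbf{g}^{*}_{k,n}$, $\mathbf{b}=\mathbf{h}^{*}_{j,n}$ to obtain the kernel $(\mathbf{h}_{j,n}\mathbf{h}^{H}_{j,n})^{T}\odot\mathbf{g}^{*}_{k,n}\mathbf{g}^{T}_{k,n}$; setting $j=k$ produces ${\mathbf{A}}^{\mathrm{IRS}}_{k,n}$ and summing over $j\neq k$ produces the interference part of $\widehat{\mathbf{A}}^{\mathrm{IRS}}_{k,n}$, while the forwarded-noise contribution $\sigma_n^2\widetilde{\psi}_{k,n}$ collapses via $\textrm{Diag}(\hm{\theta}_\mathrm{I})^{H}\textrm{Diag}(\hm{\theta}_\mathrm{I})$ and the $|\mathbf{g}_{k,n}|^2$ identity to the kernel $\sigma_n^2\,\mathbf{I}_{M_{\mathrm{IRS}}}\odot\mathbf{g}^{*}_{k,n}\mathbf{g}^{T}_{k,n}$. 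The harvester input power $p_{\mathrm{E},k}$ is handled exactly as the SINR numerator, delivering $\bar{\mathbf{A}}^{\mathrm{IRS}}_{k,n}$.

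Comparing these results term by term against~\eqref{nnh} and~\eqref{keyA}--\eqref{keyAbar} then establishes the claim that each relay kernel carries over verbatim under $\otimes\mapsto\odot$. The one point requiring care --- and the main obstacle I anticipate --- is the dimensional bookkeeping: because only the $M_{\mathrm{IRS}}$ diagonal entries of $\mathbf{U}$ now enter (through $\hm{\theta}$) rather than all $M_{\mathrm{R}}^2$ entries (through $\textrm{vec}(\mathbf{U})$), the ambient identity matrix must shrink from $\mathbf{I}_{M_{\mathrm{R}}^2}$ to $\mathbf{I}_{M_{\mathrm{IRS}}}$ in the noise terms of $\widetilde{\mathbf{A}}^{\mathrm{IRS}}_{\mathrm{E},n}$ and $\widetilde{\mathbf{A}}^{\mathrm{IRS}}_{\mathrm{I},n}$, and one must verify that the Hadamard factorization is consistently applied in the $M_{\mathrm{IRS}}\times M_{\mathrm{IRS}}$ dimension throughout --- precisely the \emph{proper dimension consideration} flagged in the statement.
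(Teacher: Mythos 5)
Your proof is correct, and it takes a genuinely different route from the paper's. The paper (Appendix~\ref{app1000}) reduces the IRS case to the already-derived relay case: it observes that $\mathbf{u}_{\mathrm{I},n}=\textrm{vec}(\textrm{Diag}(\boldsymbol{\theta}_{\mathrm{I}}))$ is sparse, with nonzero entries only at positions $iM_{\mathrm{IRS}}+i+1$, $0 \leq i \leq M_{\mathrm{IRS}}-1$, so the relay quadratic form $\mathbf{u}^{H}_{\mathrm{I},n}\widetilde{\mathbf{A}}^{\mathrm{R}}_{\mathrm{I},n}\mathbf{u}_{\mathrm{I},n}$ collapses to $\boldsymbol{\theta}^{H}_{\mathrm{I}}\widetilde{\mathbf{A}}^{\mathrm{IRS}}_{\mathrm{I},n}\boldsymbol{\theta}_{\mathrm{I}}$, where $\widetilde{\mathbf{A}}^{\mathrm{IRS}}_{\mathrm{I},n}$ is the principal submatrix of the Kronecker kernel at those index pairs --- and that submatrix is precisely the Hadamard product (the step the paper compresses into ``some matrix manipulations''). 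You instead bypass the vectorized relay machinery entirely and re-derive each kernel from the physical quantities in~\eqref{newq1},~\eqref{newqq}, and~\eqref{newpow} with the diagonal structure imposed from the outset, driven by the identity $(\mathbf{a}\odot\mathbf{b})(\mathbf{c}\odot\mathbf{d})^{T}=(\mathbf{a}\mathbf{c}^{T})\odot(\mathbf{b}\mathbf{d}^{T})$, which is the ``from scratch'' counterpart of the paper's submatrix-extraction fact; your entry-wise steps all check out (e.g., $\textrm{Diag}(\vert\mathbf{c}\vert^{2})=(\mathbf{c}\mathbf{c}^{H})^{T}\odot\mathbf{I}_{M_{\mathrm{IRS}}}$ and $(\mathbf{g}^{*}_{k,n}\mathbf{g}^{T}_{k,n})\odot(\mathbf{h}^{*}_{j,n}\mathbf{h}^{T}_{j,n})=(\mathbf{h}_{j,n}\mathbf{h}^{H}_{j,n})^{T}\odot\mathbf{g}^{*}_{k,n}\mathbf{g}^{T}_{k,n}$ match~\eqref{keyA1}--\eqref{keyAbar1} after taking expectations over $\mathbf{s}_{\mathrm{I},n}$, which you leave implicit but which is harmless since~\eqref{newq1} already carries $\mathbf{Q}_{\mathrm{I},n}$). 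What each approach buys: yours is self-contained and more elementary --- it needs neither Appendix~\ref{app1} nor the Kronecker-to-Hadamard submatrix identity, and it makes the $\otimes\mapsto\odot$ correspondence transparent as the Hadamard analogue of the mixed-product rule --- while the paper's reduction is shorter, reuses the relay kernels~\eqref{nnh} and~\eqref{keyA}--\eqref{keyAbar} without recomputing anything, and directly accounts for both halves of the lemma's closing remark, including why the noise term's identity shrinks from $\mathbf{I}_{M^{2}_{\mathrm{R}}}$ to $\mathbf{I}_{M_{\mathrm{IRS}}}$, a point you correctly flag and handle as well.
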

\begin{proof} See Appendix \ref{app1000}.\end{proof}
Next, we focus on constraint $\textrm{C}_{\mathrm{IRS}}$. First, let us introduce the following minorizer~\cite{song2015sequence}
\begin{equation}
 \vert x \vert \geq \Re \left \{ x^{*} \frac{x_0}{\vert x_0 \vert}  \right \}.
\end{equation}
Then, considering the above minorizer, the constraint $\textrm{C}_{\mathrm{IRS}}$ is expressed as the $i^{th}$ iteration of MM as
\begin{equation} \label{nnb}
 \Re \left \{ {\theta}^{*}_{\mathrm{E},m} \frac{{\theta}^{(i-1)}_{\mathrm{E},m}}{\vert {\theta}^{(i-1)}_{\mathrm{E},m} \vert}  \right \} \geq 1, \  \Re \left \{ {\theta}^{*}_{\mathrm{I},m} \frac{{\theta}^{(i-1)}_{\mathrm{I},m}}{\vert {\theta}^{(i-1)}_{\mathrm{I},m} \vert}  \right \} \geq 1, \hspace{8pt} \forall m.
\end{equation}
Therefore, the optimization problem in~\eqref{maxmin4} is modified as
\begin{eqnarray} \label{maxmin41}
 \max_{\alpha_{a}, \hm{\theta}_{\mathrm{E}}, \hm{\theta}_\mathrm{I}}  && \alpha_{a} \\ \nonumber
 \mbox{s.t.}\;\;&& \textrm{C}_3:\eqref{bb31},~\textrm{C}_4:{{E}}_{k}\left( \hm{\theta}_\mathrm{E}\right) \geq {E_{\textrm{min},k}} ,  \hspace{2pt} \forall k,~\textrm{C}_{\mathrm{IRS}}:~\eqref{nnb},
 \\ \nonumber&& \textrm{C}_5:  \sum^{{N}}_{n=1} {\log_2 \left(1 + \frac{ \hm{\theta}^{H}_\mathrm{I} {\mathbf{A}}^{\mathrm{IRS}}_{k,n} \hm{\theta}_\mathrm{I}  }
 {\hm{\theta}^{H}_\mathrm{I} \widehat{\mathbf{A}}^{\mathrm{IRS}}_{k,n} \hm{\theta}_\mathrm{I} +  {\zeta}_{k,n,a}} \right)}   \geq \alpha_{a},  \hspace{2pt} \forall k,
\end{eqnarray}
where the steps for constraints $\textrm{C}_3$-$\textrm{C}_5$ in Subsection~\ref{relaysub} are used exactly here.
\subsection{Maximization over $\{ \mathbf{s}_{\mathrm{E},n}, \mathbf{p}_{\mathrm{I},n}\}$} \label{mina}
By introducing an auxiliary variable $\alpha_b$, the relay/IRS problem in~\eqref{fg10} for fixed $\{\mathbf{U}_{\mathrm{E},n}, \mathbf{U}_{\mathrm{I},n} , \tau\}$ boils down to the following optimization:
\begin{eqnarray}\label{fg105newa}
 \max_{\alpha_b, \{ \mathbf{s}_{\mathrm{E},n}, \mathbf{p}_{\mathrm{I},n} \} ^{{N}}_{n=1}}  & \alpha_b  \\ \nonumber
 \mbox{s.t.}\;\;&& \hspace{-5pt} \textrm{C}_2:\eqref{41}, \hspace{3pt} p_{\mathrm{I},k,n}\geq 0,  \forall k,n, \hspace{7pt}
 \textrm{C}_3:\eqref{newq},  \hspace{7pt} \textrm{C}_4: {E}_{k} \left(\{ \mathbf{s}_{\mathrm{E}}\}_{n=1}^N \right) \geq E_{\textrm{min},k} ,   \forall k,
 \\ \nonumber &&
 \textrm{C}_5:
 \sum_{n=1}^{{N}} {\log_2 \left(1 + \gamma_{k,n} (\mathbf{p}_{\mathrm{I},n}) \right)} \geq \alpha_b,   \forall k.
\end{eqnarray}
The constraints $\textrm{C}_4$ and $\textrm{C}_5$ of this sub-problem are non-convex. We first rewrite the SINR associated with the $k^{th}$ pair in~\eqref{newqq} as
\begin{equation} \label{tr}
 \gamma_{k,n} (\mathbf{p}_{\mathrm{I},n}) = \frac{ { \mathbf{a}}_{k,n}^{T}\mathbf{p}_{\mathrm{I},n} }{{ \mathbf{b}}_{k,n}^{T}\mathbf{p}_{\mathrm{I},n} + {\sigma}^2_{n}  {\widetilde{{\psi}}}_{k,n}  + {\delta}^2_{k,n} +{\sigma}^2_{k,n} },
\end{equation}
where ${\mathbf{a}}_{k,n}= \psi_{k,k,n} \mathbf{e}_k$, ${\mathbf{b}}_{k,n}= [{\psi}_{k,1,n}, {\psi}_{k,2,n}, \cdots  ,{\psi}_{k,k-1,n},0$ $,{\psi}_{k,k+1,n}, \cdots, {\psi}_{k,{K},n}]^T$, and $\mathbf{e}_k$ is the $k^{th}$ unit vector. Therefore, the LHS of $\textrm{C}_5$ in~\eqref{fg105newa} is written as
\begin{eqnarray} \label{keyh}
\sum_{n=1}^{{N}} \big \{  \log_2 \left({ { \mathbf{q}}_{k,n}^{T}\mathbf{p}_{\mathrm{I},n} + \zeta_{k,n,b} } \right) - \log_2 \left({{ \mathbf{b}}_{k,n}^{T}\mathbf{p}_{\mathrm{I},n} + \zeta_{k,n,b} } \right)  \big \},
\end{eqnarray}
where $ \mathbf{q}_{k,n}= \mathbf{a} _{k,n}+\mathbf{b} _{k,n}$ and $\zeta_{k,n,b} = {\sigma}^2_{n} {\widetilde{{\psi}}}_{k,n}  +{\sigma}^2_{k,n} + {\delta}^2_{k,n}$. Then, similar to the procedure in Subsection~\ref{suma} for $\textrm{C}_5$, we resort to the MM technique. Precisely, considering the inequality in~\eqref{fg103}, the second term in~\eqref{keyh} is minorized by setting $x =  {{ \mathbf{b}}_{k,n}^{T}\mathbf{p}_{\mathrm{I},n} + \zeta_{k,n,b} }$ and $x_{0} =  {{ \mathbf{b}}_{k,n}^{T}\mathbf{p}^{0}_{\mathrm{I},n} + \zeta_{k,n,b} }$. By substituting the minorizer in~\eqref{keyh}, the constraint $\textrm{C}_5$ at the $i^{th}$ iteration is obtained as
\begin{align} \label{kk} 
\textrm{C}_5: \sum_{n=1}^{{N}} \Big \{ & \log_2 \left({{ \mathbf{q}}_{k,n}^{T}\mathbf{p}_{\mathrm{I},n} + \zeta_{k,n,b}} \right) - \log_2 ({{ \mathbf{b}}_{k,n}^{T} \mathbf{p}^{(i-1)}_{\mathrm{I},n} + \zeta_{k,n,b} }  )   \\ \nonumber & - \frac{\log_2 e}{{ \mathbf{b}}_{k,n}^{T}\mathbf{p}^{(i-1)}_{\mathrm{I},n} + \zeta_{k,n,b}} \mathbf{b}^{T}_{k,n} \left( \mathbf{p}_{\mathrm{I},n} -\mathbf{p}^{(i-1)}_{\mathrm{I},n} \right)  \Big \} \geq  \alpha_b.
\end{align}
Next, we consider the non-convex constraint $\textrm{C}_4$. It is observed that the term ${E}_{k} \left(\{ {\mathbf{s}}_{\mathrm{E},n} \}_{n=1}^N \right)$ in the LHS of the this constraint is neither convex nor concave w.r.t. ${\mathbf{s}}_{\mathrm{E},n}$. Therefore, similar to the procedure in Subsection~\ref{suma}, we apply the MM by selecting ${\beta}_{k,n,b}$ (see Appendix~\ref{beta12}) and minorize $\textrm{C}_4$ at the $i^{th}$ iteration by
\begin{align} \label{kkk1}
 & {E}_{k} \left(\left\{ {\mathbf{s}}^{(i-1)}_{\mathrm{E},n} \right\}_{n=1}^N \right) + \frac{1}{2}   \sum_{n=1}^{{N}}  \beta_{k,n,b} {\left( \mathbf{s}^{{(i-1)}}_{\mathrm{E},n}\right)}^{H} \mathbf{s}^{(i-1)}_{\mathrm{E},n}  + \sum_{n=1}^{{N}}    \Re  \left \{ {\hm{\vartheta}}_{k,n,b}^{(i)} \left(\mathbf{s}_{\mathrm{E},n} - {\mathbf{s}}^{(i -1)}_{\mathrm{E},n} \right) \right \} \\ \nonumber & \hspace{5pt}- \frac{1}{2}   \sum_{n=1}^{{N}}  \beta_{k,n,b} {\mathbf{s}^{H}_{\mathrm{E},n}} \mathbf{s}_{\mathrm{E},n} \geq E_{\textrm{min},k} ,  \hspace{5pt} \forall k,
\end{align}
where we define
\begin{align*}
 {\hm{\vartheta}}_{k,n,b}^{(i)}  =& \beta_{k,n,b} \left({\mathbf{s}_{\mathrm{E},n}^{(i-1)}}\right)^{{H}}  + \frac{\tau}{\rho} \textrm{exp}  \widetilde{c}\hspace{0.5mm}  \textrm{exp} \left( \widetilde{a}   {\log}^2  \omega_{k,b}^{(i)}    \right) {\left(    \omega_{k,b}^{(i)}  \right) }^{\widetilde{b}-1}  \left({2\widetilde{a} \log  \omega_{k,b}^{(i)}  } \hspace{2pt}    +\widetilde{b} \right) \hspace{2pt} \left({{\mathbf{s}}^{(i-1)}_{\mathrm{E},n}}\right)^{{H}} \hm{\Xi}_{k,n},
\end{align*}
with $ \omega^{(i)}_{k,b}  = \frac{1}{2} \sum_{n=1}^{{N}} \left({{\mathbf{s}}^{(i-1)}_{\mathrm{E},n}}\right)^{{H}} \hm{\Xi}_{k,n} {\mathbf{s}}^{(i-1)}_{\mathrm{E},n}$. Consequently, the $i^{th}$ iteration of the MM update for~\eqref{fg105newa} is obtained easily as the interior point solution of the following convex problem
\begin{eqnarray} \label{kkk}
	\max_{\alpha_b, \lbrace \mathbf{s}_{\mathrm{E},n}, \mathbf{p}_{\mathrm{I},n} \rbrace ^{{N}}_{n=1}} && \displaystyle \alpha_b \\ \nonumber
	\mbox{s.t.}\;\;&&
	\textrm{C}_2,~\textrm{C}_3,~\textrm{C}_4:\eqref{kkk1},~\textrm{C}_5:\eqref{kk}.
\end{eqnarray}
\subsection{Maximization over $\tau$} \label{minc}
The optimization problem in \eqref{fg10} w.r.t. ${\tau}$ becomes 
\begin{eqnarray}\label{fg190}
	\min_{{\tau}} && {\tau} \\ \nonumber \mbox{s.t.}\;\;&&
	\textrm{C}_1: 0 \leq {\tau} \leq T,~~~
	\textrm{C}_2: {\tau} v_k \leq \widetilde{v}_{k},  \hspace{3pt} \forall k,~~~
	\textrm{C}_3: {\tau} \widehat{v}_1 \leq \widehat{v}_2, ~~~
	\textrm{C}_4: {\tau}   \geq \bar{v}_k ,  \hspace{3pt} \forall k,
\end{eqnarray}
where
\begin{equation}
	v_{k} = \frac{1}{2\rho}  \sum_{n=1}^{{N}} \left \lbrace \vert s_{\mathrm{E},k,n} \vert ^2 -  p_{\mathrm{I},k,n} \right \rbrace  ,~\forall k,~~~\widetilde{v}_{k} = T   \sum_{n=1}^{{N}} \left \lbrace p^{\mathrm{rf}}_{{k},n} -\frac{p_{\mathrm{I},k,n}}{2\rho}  \right \rbrace  ,~\forall k,
\end{equation}
\begin{equation}
	\widehat{v}_{1} = \frac{1}{2\rho}   \sum_{n=1}^{{N}} \left \lbrace \mathbf{s}^{H}_{\mathrm{E},n}  {\mathbf{V}}_{\mathrm{E},n}  \mathbf{s}_{\mathrm{E},n} +\sigma_{n}^{2} \textrm{tr}\left \lbrace \mathbf{U}_{\mathrm{E},n} \mathbf{U}^{H}_{\mathrm{E},n} \right\rbrace   -   \textrm{tr} \lbrace  \mathbf{Q}_{\mathrm{I},n}  { \mathbf{V}}_{\mathrm{I},n} \rbrace -  \sigma_{n}^{2} \textrm{tr} \lbrace \mathbf{U}_{\mathrm{I},n} \mathbf{U}^{H}_{\mathrm{I},n} \rbrace   \right \rbrace  ,
\end{equation}
\begin{equation}
	\widehat{v}_{2} = T  \sum^{{N}}_{n=1} \left \lbrace p^{\mathrm{rf}}_{n} -\frac{1}{2\rho} \left( \textrm{tr}\lbrace  \mathbf{Q}_{\mathrm{I},n}  { \mathbf{V}}_{\mathrm{I},n} \rbrace +  \sigma_{n}^{2} \textrm{tr}\left \lbrace \mathbf{U}_{\mathrm{I},n} \mathbf{U}^{H}_{\mathrm{I},n} \right\rbrace \right)   \right \rbrace  ,
\end{equation}
\begin{equation} \label{jjj}
	\bar{v}_k = \frac{\rho{E}_{\textrm{min},k}}{\textrm{exp} \left( \widetilde{a} \hspace{1pt}  {\textrm{log}}^2   p_{\mathrm{E},k}      \right) p^{  \widetilde{b}}_{\mathrm{E},k} \hspace{1pt} \textrm{exp} \left( \widetilde{c} \right)},~\forall k.
\end{equation}
\begin{algorithm}[t] \label{tt}
	\caption{The Proposed Method for Minimum Rate Maximization in Relay/IRS Systems}
	\begin{algorithmic}[tb]
		\STATE{\!\!\!\!\!\!\!\!\!\!\!\!\!  1.} Relay: Initialize $\mathbf{U}^{(l)}_{\mathrm{E},n}, \mathbf{U}^{(l)}_{\mathrm{I},n} \in \complexC^{{M_{\mathrm{R}}} \times {M_{\mathrm{R}}}}$, $\tau^{(l)} \in \realR_+$, $l\leftarrow0$.
		\STATE{\!\!\!\!\!\!\!\!\!\!\!\!\!  1.} IRS: Initialize ${\hm{\theta}}^{(l)}_{\mathrm{E}}, {\hm{\theta}}^{(l)}_{\mathrm{I}} \in \complexC^{M_{\mathrm{IRS}}}$, $\tau^{(l)} \in \realR_+$,  $l\leftarrow0$.
		\REPEAT
		\STATE{2.} Relay:~Initialize $ \mathbf{U}^{ (i)}_{\mathrm{E},n}$ and $ \mathbf{U}^{ (i)}_{\mathrm{I},n}$ and set $i=0$.
		\STATE{2.} IRS:~Initialize ${\hm{\theta}}^{(i)}_{\mathrm{E}}, {\hm{\theta}}^{(i)}_{\mathrm{I}}$ and set $i=0$.
		\REPEAT
		\STATE{3.} Relay: Solve   (\ref{kkk12}) to obtain $\{ \mathbf{U}_{\mathrm{E},n}, \mathbf{U}_{\mathrm{I},n}, \alpha_a\}$.
		\STATE{3.} IRS: Solve   (\ref{maxmin41}) to obtain $\{ {\hm{\theta}}_{\mathrm{E}}, {\hm{\theta}}_{\mathrm{I}}, \alpha_a\}$.
		\STATE{4.} Update $i \leftarrow i+1$.
		\UNTIL convergence
		\STATE{5.} Relay/IRS:~Initialize $\mathbf{s}^{(i)}_{\mathrm{E},n}$, $\mathbf{p}^{(i)}_{\mathrm{I},n}$ and set $i=0$.
		\REPEAT
		\STATE{6.} Relay/IRS:~Solve the convex problem in (\ref{kkk}) to obtain $ \{\mathbf{s}_{\mathrm{E},n}, \mathbf{p}_{\mathrm{I},n}, \alpha_b\} $.
		\STATE{7.} Update $i \leftarrow i+1$.
		\UNTIL convergence
		\STATE{ 8.}   Relay/IRS:~Compute $\tau^{(l)}$ via the closed-form solution in~\eqref{for1}.
		\STATE{9.} Update $l \leftarrow l+1$.
		\UNTIL convergence
	\end{algorithmic}
\end{algorithm}
Therefore, a closed-form solution (for a non-empty feasible set\footnote{The following conditions lead to a non-empty feasible set for the problem: \\1) ${\bar{v}_k} \leq T , \forall k$, ~2) $\widetilde{v}_{k} \geq 0 , \forall k$, ~3) ${\widehat{v}}_{2} \geq 0$, ~4) $\frac{\widetilde{v}_{j}}{{v}_{j}} {\vert}^{{K}}_{j=1} \geq {\bar{v}_k} , \forall k$ (for ${v}_{j} \geq 0, \forall j$), ~5) $\frac{{\widehat{v}}_{2}}{{\widehat{v}}_{1}} \geq {\bar{v}_k} , \forall k$ (for ${v}_{1} \geq 0$).}) can be obtained as
\begin{equation} \label{for1}
	\tau_{\textrm{opt}}= \textrm{max} \lbrace \bar{v}_1, \bar{v}_2, ... , \bar{v}_{{K}}  \rbrace.
\end{equation}
Algorithm~1 summarizes the discussions in Section~\ref{sum} and represents the steps of the proposed method for maximizing the minimum rate of all user pairs in relay/IRS WPC systems.
Note that similar mathematical derivations are used to develop t-f-static algorithm for relay system as well as t-static algorithm for both relay and IRS systems.
\begin{remark}[convergence]
It has been shown that under some mild conditions, the MM technique converges to the stationary points of the problem~\cite{rezaei2019throughput,naghsh2019max}.
\end{remark}
\subsection{Complexity Analysis} \label{comp}
The main computational burdens in Algorithm 1 are associated with steps~3, 6, and 8. At each inner iteration in step~3, the convex problems in~\eqref{kkk12} and~\eqref{maxmin41} are solved via interior point methods for relay and IRS system design, respectively, with a computational complexity of $\mathcal{O}\left( \left(  2NM^2_{\mathrm{R}} (1+2N)(1+K)  \right)^{3.5} \right)$ and $\mathcal{O} \left( \left( 6M_{\mathrm{IRS}} (N+K+1) \right)^{3.5} \right)$~\cite{ben2001lectures}. Table~\ref{table21} compares the computational complexity of step~3 for other versions of relay/IRS models. Similar to step 3, the complexity (per inner iterations) for step 6  which solves~\eqref{kkk} (e.g., by using the interior point methods) is $\mathcal{O}\left( \left( KN (1+2N)(5+2K) \right)^{3.5} \right)$ for all versions of relay/IRS models. In step 8, the closed-form expression in \eqref{for1} must be calculated leading to the complexity of\footnote{This can be decreased to $\mathcal{O}(N^{2.3})$ via finding the best order of matrix multiplications (see \cite{czumaj1996very} for details).} $\mathcal{O}(N^{3})$.
\begin{table}
 \centering
 \caption{The Computational Complexity Order (per Inner Iterations) for Step 3 of the Algorithm~1.}\label{table21}
 \begin{tabular}{|l|c|}
 \hline
  Relay& $\mathcal{O}\left( \left(  2NM^2_{\mathrm{R}} (1+2N)(1+K)  \right)^{3.5} \right)$ \\ \hline
 Relay (t-static)& $\mathcal{O}\left( \left(  NM^2_{\mathrm{R}} (1+N)(1+2K)  \right)^{3.5} \right)$\\ \hline
 Relay (t-f-static)& $\mathcal{O}\left( \left(  2M^2_{\mathrm{R}} (1+2K)  \right)^{3.5} \right)$\\ \hline
 IRS & $\mathcal{O} \left( \left(6M_{\mathrm{IRS}} (N+K+1) \right)^{3.5} \right)$\\ \hline
 IRS (t-static) & $\mathcal{O} \left( \left(2M_{\mathrm{IRS}} (N+2K+1) \right)^{3.5} \right)$\\
 \hline
 \end{tabular} \label{tab21}
\end{table}
\section{Numerical Examples }\label{num}
Here, we evaluate the
proposed relay/IRS method in different scenarios. The channels from transmitters to the relay and the channels from the relay to the receivers are modeled as $\mathbf{H}_n =0.1 \left( \frac{\widetilde{{d}}_1}{{d}_0} \right)^{\frac{-\widetilde{\gamma}}{2}} \widetilde{\mathbf{H}}_n$ and $\mathbf{G}_n =0.1 \left( \frac{\widetilde{{d}}_2}{{d}_0} \right)^{\frac{-\widetilde{\gamma}}{2}} \widetilde{\mathbf{G}}_n$, respectively, where ${d}_0=1$ m is a reference distance, ${\widetilde{{d}}_1}$ is the distance between $\mathcal{T}_k$ and the relay, ${{\widetilde{d}}_2}$ is the distance between the relay and $\mathcal{R}_k$, and $\widetilde{\gamma}=3$ is the path-loss exponent. It is assumed that the elements of $\widetilde{\mathbf{H}}_n$ and $\widetilde{\mathbf{G}}_n$ are i.i.d. CSCG random variables with zero mean and unit variance. As shown in Fig.~\ref{ht1}, the transmitters and receivers are distributed uniformly within a circle with radius $r_{\mathcal{T}}$ and $r_{\mathcal{R}}$, respectively. We set the distance parameters as ${{d}_1}={{d}_2}={{d}_3}=10$ m and $r_{\mathcal{T}}=r_{\mathcal{R}}=5$ m.
The maximum power budget for $\mathcal{T}_k$, relay, and IRS are set to $p^{\mathrm{rf}}_{k,n}=p^{\mathrm{rf}}_{\mathrm{R},n}=28$ dBm, $p^{\mathrm{rf}}_{\mathrm{IRS}}=20$ dBm, $\forall k,n$,  and the noise power at the relay, IRS and receivers are supposed to be ${\sigma}^2_{\mathrm{R},n}={\sigma}^2_{k,n}= {\delta}^2_{k,n}= -80$ dBm, ${\sigma}^2_{\mathrm{IRS},n}=-100$ dBm, $\forall k,n$. The total bandwidth is fixed to $B_t=1$ MHz. We further assume  the total operation time $T=1$. The curve fitting parameters for non-linear EH circuits are equal to $\widetilde{a}=-0.11$, $\widetilde{b}=-1.17$, and $\widetilde{c}=-12$~\cite{clerckx2018beneficial}. Also, we set ${K}=5$, ${N}=8$, $M_{\mathrm{R}}=6$, and $E_{\textrm{min},k}=E_{\textrm{min}}=10$~$\mu W,~ \forall k$, unless otherwise specified. We solve the convex optimization problems using CVX~\cite{cvx}.
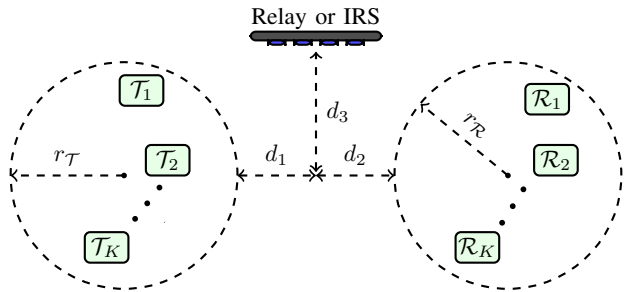
\begin{figure}
 \centering
 \begin{tikzpicture}[even odd rule,rounded corners=2pt,x=12pt,y=12pt,scale=.55,every node/.style={scale=.9}]

 \draw[thick,fill=green!10] (-11.25,4) rectangle ++(2.5,1.75) node[midway]{ $\mathcal{T}_1$};

 \draw[thick,fill=green!10] (-8.5-1.25,0) rectangle ++(2.5,1.75) node[midway]{ $\mathcal{T}_2$};

 \draw[thick,fill=green!10] (-13.25,-5) rectangle ++(2.5,1.75) node[midway]{ $\mathcal{T}_{K}$};


 \draw[thick,fill=black!70] (-3.75,7.75) rectangle ++(7.5,.5);
 \draw[thick,fill=blue!70] (-3.75+1,7.5) rectangle ++(1,.25);
 \draw[thick,fill=blue!70] (-3.75+1+1.5,7.5) rectangle ++(1,.25);
 \draw[thick,fill=blue!70] (-3.75+1+3,7.5) rectangle ++(1,.25);
 \draw[thick,fill=blue!70] (-3.75+1+4.5,7.5) rectangle ++(1,.25);
 \node[] at (0,9) {Relay or IRS};

 \draw[dashed,black!100,thick] (-11,0) circle (6.5) ;
 \draw[fill=black!100,thick] (-11,0) circle (.1) ;
 \draw[dashed,black!100,->,thick] (-11,0) -- (-11-6.5,0) node[pos=0.5,above,sloped]{$r_{\mathcal{T}}$};

 \draw[dashed,black!100,<->,thick] (0,.20) -- (0,7.1) node[pos=0.5,right]{$d_3$};

 \draw[dashed,black!100,<->,thick] (-4.5,0) -- (0,0)
 node[pos=0.5,above,sloped]{$d_1$};
 \draw[dashed,black!100,<->,thick] (0,0) -- (4.5,0)
 node[pos=0.5,above,sloped]{$d_2$};

 \draw[dashed,black!100,thick] (11,0) circle (6.5) ;
 \draw[fill=black!100,thick] (11,0) circle (.1) ;
 \draw[dashed,black!100,->,thick] (11,0) -- (6,4.1) node[pos=0.5,above,sloped]{$r_{\mathcal{R}}$};

 \draw[thick,fill=green!10] (12,3.5) rectangle ++(2.5,1.75) node[midway]{ $\mathcal{R}_1$};

 \draw[black]  (12,-4.5)--(13,0.5) node[pos=0.5,above,sloped] {\huge$\ddots$};
 \draw[white,fill=white]  (11.7,-4.7)--(12.1,-4.7)--(13.1,.6)--(12.9,.6)--(11.7,-4.7);
 \draw [black!100] (-8.7,-2.7)--(-8.25,-1.2) node[pos=0.5,above,sloped] {\huge$\ddots$};
 \draw[white,fill=white]  (-8.8,-2.7)--(-8.6,-2.7)--(-8.15,-1.2)--(-8.35,-1.2)--(-8.8,-2.7);

 \draw[thick,fill=green!10] (12.5,0) rectangle ++(2.5,1.75) node[midway]{ $\mathcal{R}_2$};
 \draw[thick,fill=green!10] (8,-5) rectangle ++(2.5,1.75) node[midway]{ $\mathcal{R}_{K}$};

 \end{tikzpicture}
 \caption{Simulation setup for relay/IRS WPC systems with ${K}$ user pairs.}
 \label{ht1}
 \centering
\end{figure}
\subsection{Relay System}
\begin{table}
 \centering
 \caption{The Baseline Benchmark Methods}\label{table2}
 \begin{tabular}{|l|c|c|}
 	\hline
 &Baseline 1 & Baseline 2\\ \hline
 Information Power Allocation& \ding{51}& \ding{51}\\ \hline
 Energy Waveform Design& \ding{55} & \ding{51} \\ \hline
 Time Allocation& \ding{51} & \ding{51} \\ \hline
 Energy/Information Relay Beamforming & \ding{51} & \ding{55}\\
 \hline
 \end{tabular} \label{tab2}
\end{table}
Here, we compare the results of the proposed algorithms with partially optimized methods (referred to as baseline schemes in the sequel) listed in Table~\ref{tab2}.
For the first baseline method, the energy signals are not optimized, and in the second baseline method, there is no optimization for the relay beamformer; more precisely, the relay amplification matrices are assumed to be identity matrices, i.e. $\mathbf{U}^{\mathrm{R}}_{\mathrm{E},n}=\widetilde{\alpha}_{\mathrm{E},n} \mathbf{I}_{M_{\mathrm{R}}},\forall n,$ and $\mathbf{U}^{\mathrm{R}}_{\mathrm{I},n}=\widetilde{\alpha}_{\mathrm{I},n} \mathbf{I}_{M_{\mathrm{R}}}, \forall n$, where the scalar parameters $\widetilde{\alpha}_{\mathrm{E},n}$ and $\widetilde{\alpha}_{\mathrm{I},n}$ are employed to satisfy the feasible set $\Omega$ in $\eqref{fg10}$.
\begin{figure}
	\hfill
	\subfigure[]{\includegraphics[width=7cm,height=4.25cm]{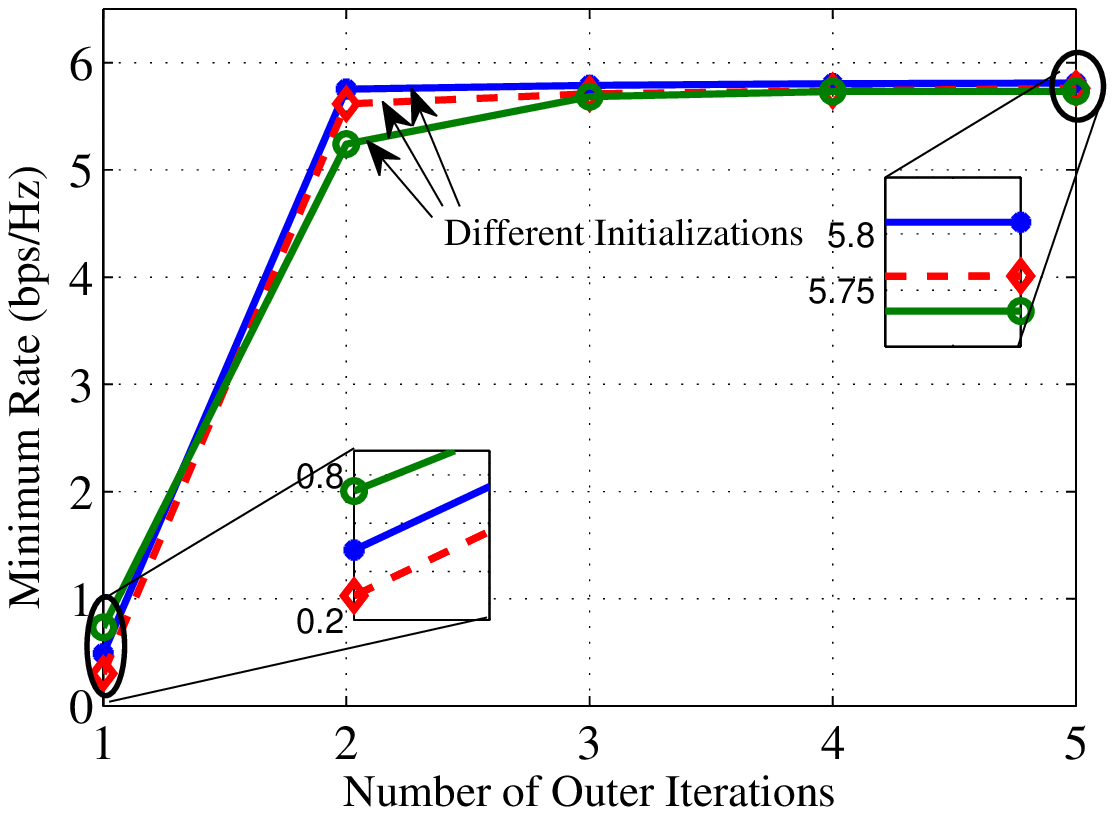}}
	\hfill
	\subfigure[]{\includegraphics[width=4.5cm,height=4.25cm]{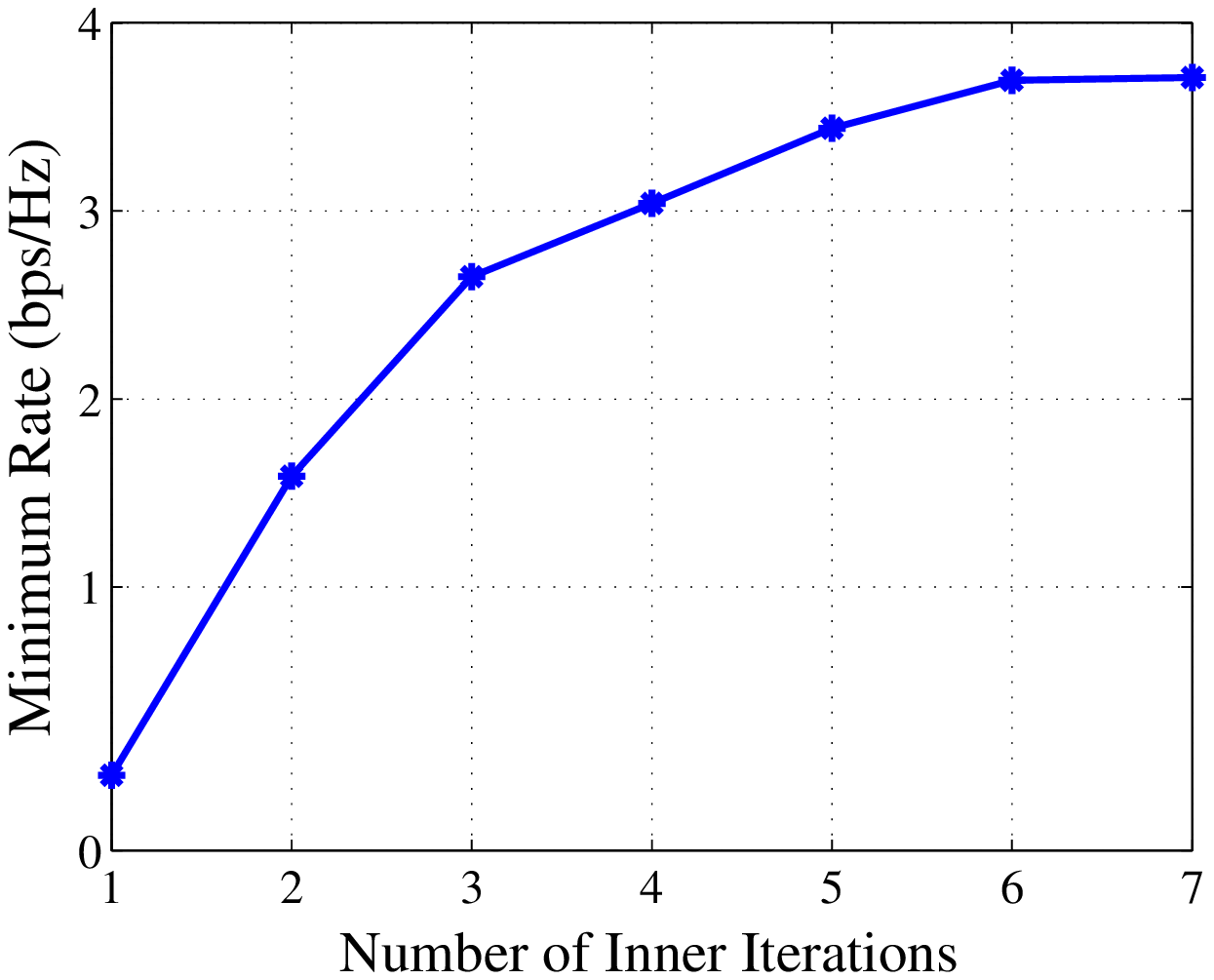}}
	\hfill
	\subfigure[]{\includegraphics[width=4.5cm,height=4.25cm]{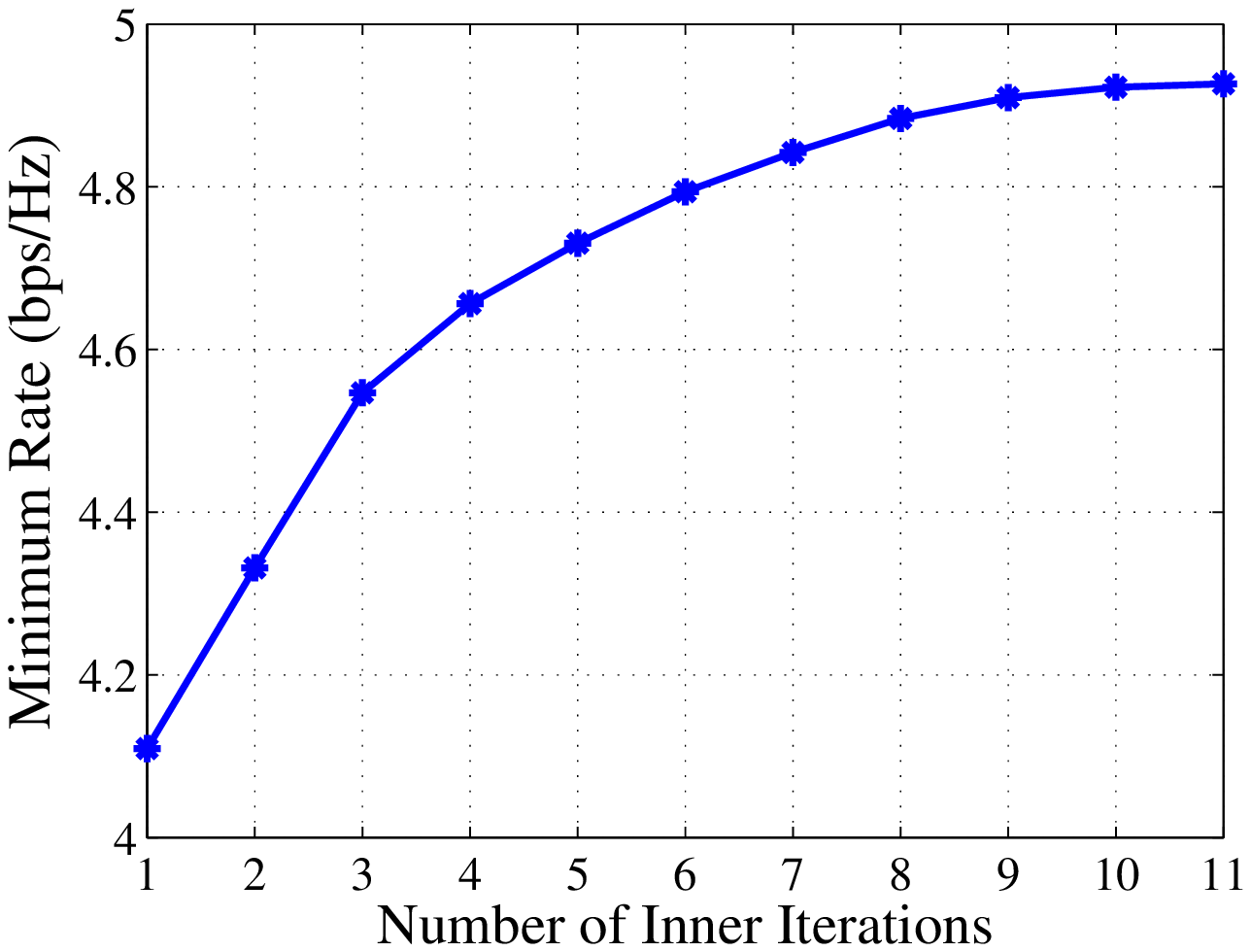}}
	\hfill
	\caption{Convergence behavior of the proposed method in Algorithm~1: (a) outer iterations for three random initial points, (b) inner iterations associated with the sub-problem \ref{relaysub} in the first outer iteration, (c) inner iterations associated with the sub-problem \ref{mina} in the first outer iteration.}
	
	\label{t0}
\end{figure}
The convergence of the proposed algorithm for inner and outer iterations (see Algorithm~1) are plotted in Fig.~\ref{t0}. This figure shows that the proposed algorithm converges within a few outer iterations. Also, in this example, the three different initializations lead to almost the same final value.

In Fig.~\ref{httttttt051}.a, we illustrate the rate-energy region of the proposed method in comparison with the first baseline method for different number of subbands. We can observe that the minimum rate increases as ${N}$ grows.
The optimal time allocation parameter ${\tau}_{\textrm{opt}}$ w.r.t. the EH target is depicted in Fig.~\ref{httttttt051}.b. It is seen that the increased energy threshold $E_{\textrm{min}}$ leads to a larger $\tau$. As $\tau$ increases, the duration of the ID phase decreases. Therefore, as we observe in Fig.~\ref{httttttt051}.a, the minimum rate reduces with increasing $E_{\textrm{min}}$. Also, the impact of the energy waveform design is evident in both figures.
\begin{figure}
\centering
 \subfigure[the rate-energy region]{\includegraphics[width=8cm,height=4.5cm]{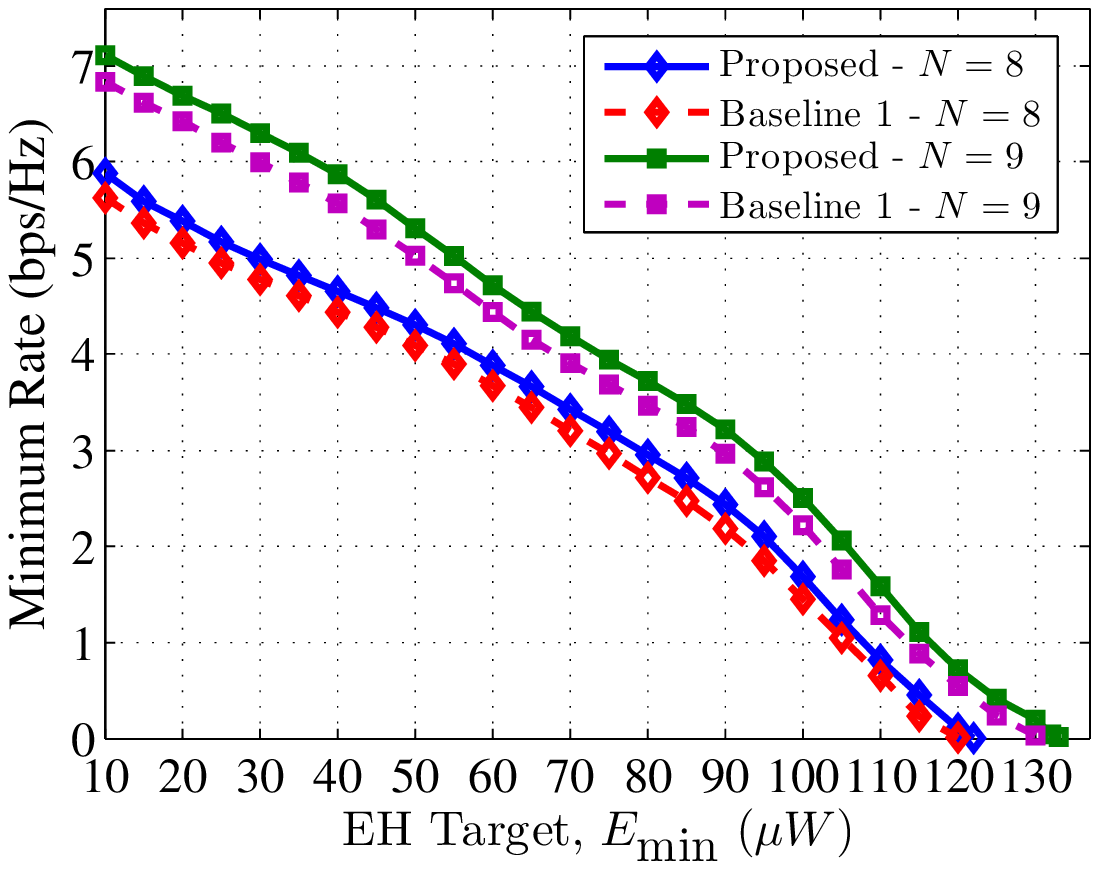}}
 \subfigure[the time allocation parameter $\tau_{\textrm{opt}}$]{\includegraphics[width=8cm,height=4.5cm]{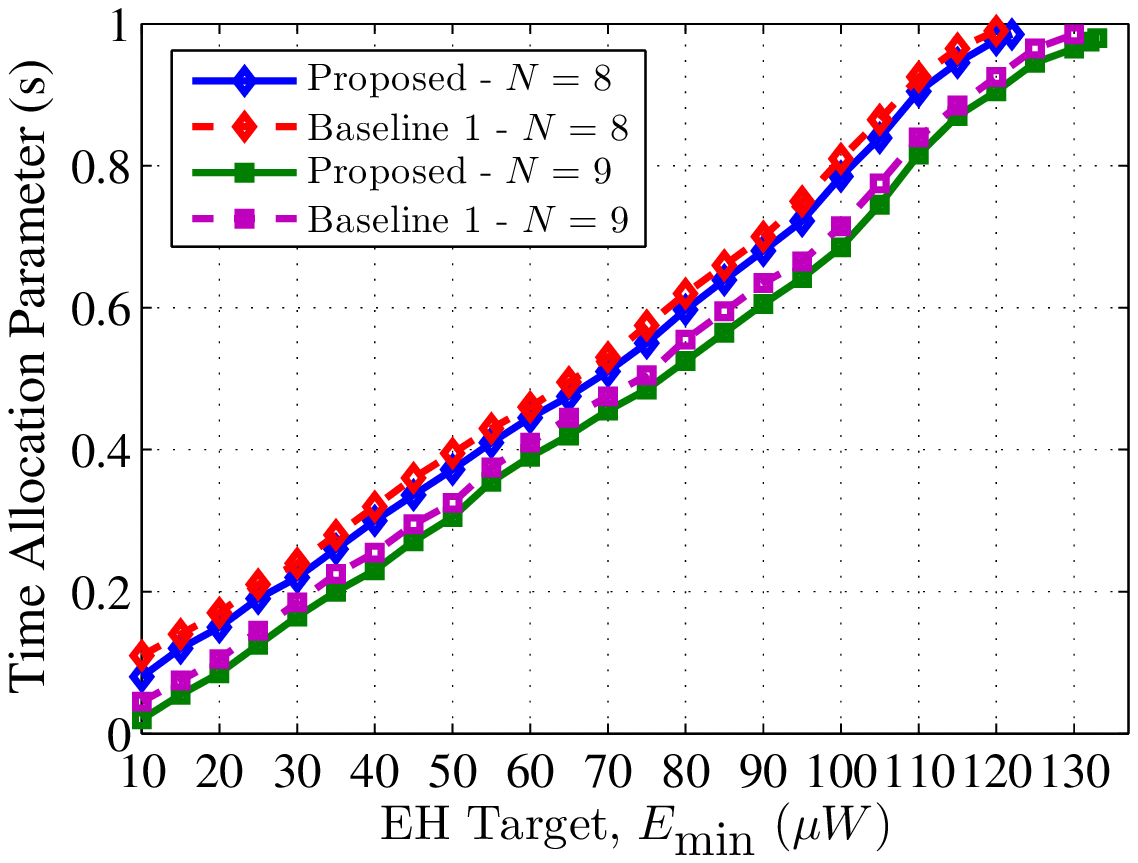}}
 \caption{Comparison of the proposed and baseline 1 methods for different number of subbands $N =8,9$.}
 \label{httttttt051}
\end{figure}
\begin{figure}
 \centering
 \subfigure[minimum rate versus number of pairs $K$]{\includegraphics[width=8cm,height=5cm]{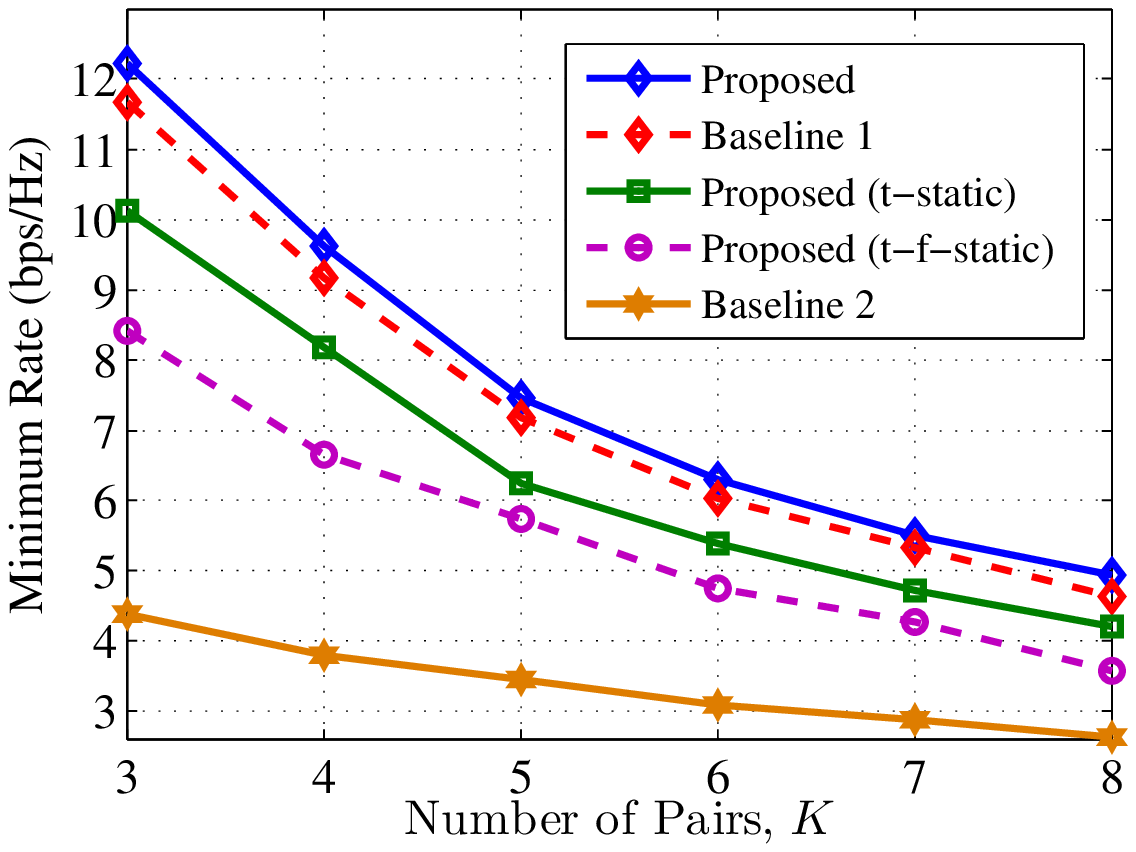}}
 \subfigure[minimum rate versus number of antennas $M_{\mathrm{R}}$]{\includegraphics[width=8cm,height=5cm]{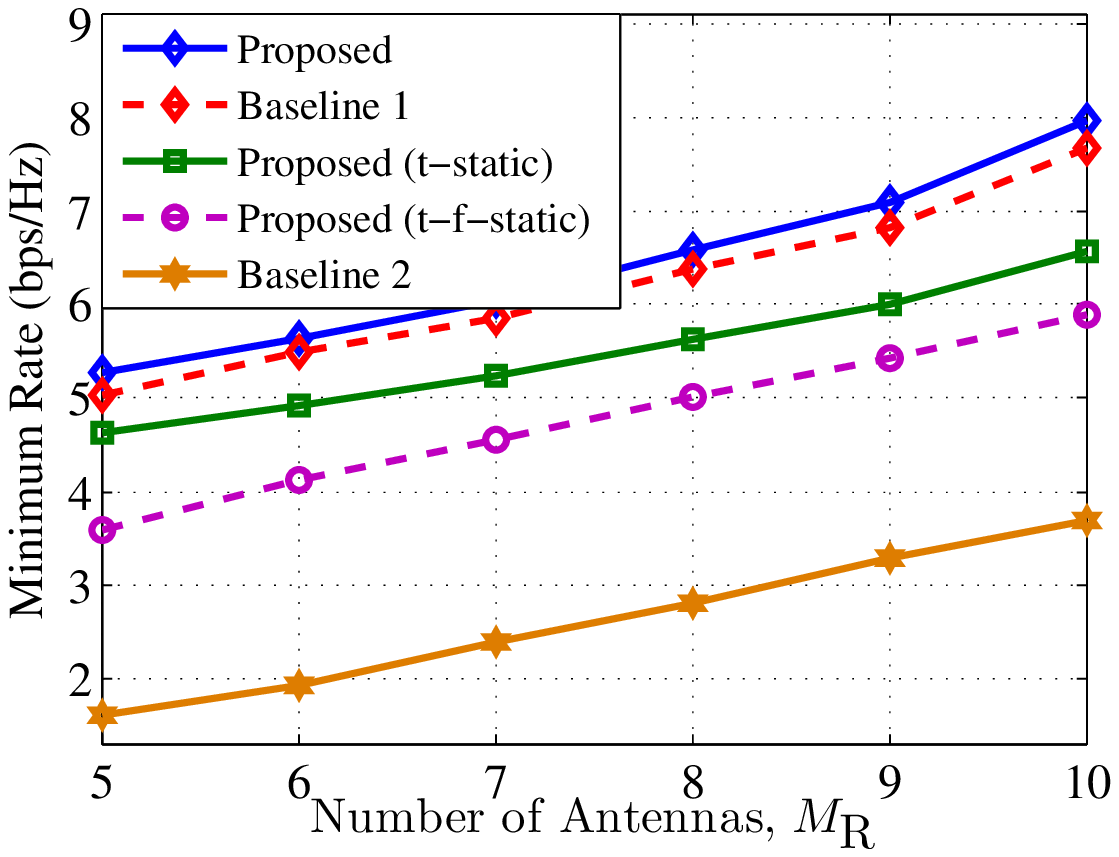}}
 \caption{Comparison of the proposed optimal and sub-optimal methods with baseline methods.}
 \label{httttttt01}
\end{figure}
In Fig.~\ref{httttttt01}.a and Fig.~\ref{httttttt01}.b, we compare the minimum rate of the proposed optimal and sub-optimal approaches with baseline methods. As we can see in Fig.~\ref{httttttt01}.a, increasing the number of pairs results in lower minimum rate for all methods with $M_{\mathrm{R}}=9$.
Furthermore, Fig.~\ref{httttttt01}.b shows that a larger $M_{\mathrm{R}}$ increases the minimum rate with an almost linear trend. The importance of the energy waveform and relay beamforming design is observed through both figures. We can see that the method with no relay beamforming has the worst performance compared to other methods since without a relay amplification matrix design, inter-pair interference cannot be managed.
\subsection{IRS System}
In this subsection, the performance of the proposed IRS-assisted WPC system is evaluated. Since most of the studied scenarios for relay (i.e., Fig.~\ref{t0}, Fig.~\ref{httttttt051}, and Fig.~\ref{httttttt01}.a) have similar trends for IRS, we only consider the scenario of Fig.~\ref{httttttt01}.b, for the sake of brevity. As we can see from Fig.\ref{jhjh}, in the case of IRS, the minimum rate has a super-linear ascent property versus increasing $M_{\mathrm{IRS}}$.
\begin{figure}
 \centering
 \includegraphics[scale=.57]{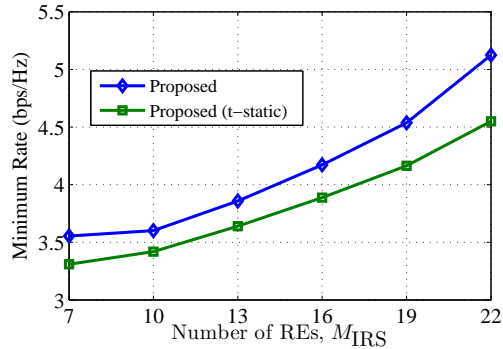}
 \caption{The effect of the number of REs $M_{\mathrm{IRS}}$ for the proposed optimal and sub-optimal IRS methods.}
 \label{jhjh}
 \centering
\end{figure}
\section{Conclusion} \label{con}
In this paper, the max-min rate maximization in a multi-carrier relay/IRS WPC system with a joint TS scheme was considered. A unified framework was proposed to maximize the minimum rate of the user pairs in both relay and IRS systems by jointly designing the energy waveforms, power of information waveforms, amplification matrices, and the time allocation parameter. The non-linearity in EH circuits was also considered in the design problem. The non-convex problem was handled via the MM technique. Numerical results demonstrated the effectiveness of the proposed algorithm in terms of the minimum rate.
As a extended future work in this area, it might be interesting to develop a distributed algorithm for design of multi-user relay/IRS WPC systems.
\appendices
\section{The derivation of the expressions in~\eqref{nnh} and~\eqref{keyA}-\eqref{keyAbar} } \label{app1}
Using
$ \textrm{tr}\left \{ {\mathbf{X}}^{H} {\mathbf{Y}} \right\} = {\textrm{vec} ( \mathbf{X} )}^{H} {\textrm{vec} ( \mathbf{Y} )} $
and
$ \textrm{vec} (\mathbf{X} \mathbf{Y} \mathbf{Z})=  ( \mathbf{Z}^{T} \otimes \mathbf{X} ) \textrm{vec}  ( \mathbf{Y} ) $, the power of the relay signal for ID mode in~\eqref{newq1} can be obtained as
\begin{align*}
  \mathbb{E}\left [{\| \mathbf{\widetilde{r}}(t) \|} _{2}^{2} \right ] 
%
   &  =
   \frac{1}{2} \sum_{n=1}^{{N}}  \bigg(  \mathbf{u}^{H}_{\mathrm{I},n} \textrm{vec} \left( \mathbf{U}_{\mathrm{I},n}  {\mathbf{H}}_n  \mathbf{Q}_{\mathrm{I},n}   {\mathbf{H}}^{H}_n  \right)  +\sigma_{n}^{2}   \mathbf{u}^{H}_{\mathrm{I},n}\mathbf{u}_{\mathrm{I},n}  \bigg)  \\ \nonumber
   &  =
   \frac{1}{2} \sum_{n=1}^{{N}}  \bigg(  \mathbf{u}^{H}_{\mathrm{I},n} \left(  \left( {\mathbf{H}}_n  \mathbf{Q}_{\mathrm{I},n}   {\mathbf{H}}^{H}_n  \right)^{T}  \otimes \mathbf{I}_{M_{\mathrm{R}}} \right)  \mathbf{u}_{\mathrm{I},n}      +\sigma_{n}^{2}   \mathbf{u}^{H}_{\mathrm{I},n}\mathbf{u}_{\mathrm{I},n}  \bigg) 
   \\ \nonumber &   =
\frac{1}{2} \sum_{n=1}^{{N}}    \mathbf{u}^{H}_{\mathrm{I},n} \widetilde{\mathbf{A}}^{\mathrm{R}}_{\mathrm{I},n}   \mathbf{u}_{\mathrm{I},n} .
\end{align*}
Similarly, we can derive the power of the relay signal for the EH mode in~\eqref{bb2} and the expressions in~\eqref{keyA}--\eqref{keyAbar}.
\section{Proof of Lemma 1} \label{app4}
By defining a positive semi-definite matrix $\mathbf{D}$ such that ${\nabla}_{\mathbf{x}}^{2} {s (\mathbf{x})} \preceq \mathbf{D} $, we can write the following majorizer for $s(\mathbf{x})$ as~\cite{sumrate}
\begin{align}\label{app41}
 s(\mathbf{x}) \leq & \hspace{2pt} s(\mathbf{x}_0) + \Re \left \{ \left({{\nabla}_{\mathbf{x}} {s (\mathbf{x})}}\right)^H {\vert}_{\mathbf{x}=\mathbf{x}_0} (\mathbf{x} - \mathbf{x}_0) \right \} + (\mathbf{x} - \mathbf{x}_0)^H  \mathbf{D} (\mathbf{x} - \mathbf{x}_0),
\end{align}
where the gradient and Hessian of ${s (\mathbf{x})}$ are respectively expressed as
\begin{equation} \label{app42}
 {{\nabla}_{\mathbf{x}} {s (\mathbf{x})}}=   \frac{-2\log_2 e}{\mathbf{x}^{H} \mathbf{T} \mathbf{x}  + \nu}  \mathbf{T} \mathbf{x},
\end{equation}
\begin{equation*}
 {\nabla}_{\mathbf{x}}^{2} {s (\mathbf{x})} =  \left( \frac{-2\mathbf{T} }{\mathbf{x}^{H} \mathbf{T} \mathbf{x}  + \nu}    + \frac{4\mathbf{T}  \mathbf{x} \mathbf{x}^{H} \mathbf{T}  }{\left( \mathbf{x}^{H} \mathbf{T} \mathbf{x}  + \nu \right) ^2} \right) \log_2 e.
\end{equation*}
Since $\mathbf{T}  \succeq \mathbf{0}$, the term $\frac{-2\mathbf{T} }{\mathbf{x}^{H} \mathbf{T} \mathbf{x}  + \nu}$ is negative semi-definite, and thus we obtain $\xi > 0$ such that for any $\nu \geqslant 0$
\begin{equation*}
 \frac{4\mathbf{T}  \mathbf{x} \mathbf{x}^{H} \mathbf{T}  }{\left( \mathbf{x}^{H} \mathbf{T} \mathbf{x}  + \nu \right) ^2} \log_2 e \leqslant \frac{4\mathbf{T}  \mathbf{x} \mathbf{x}^{H} \mathbf{T}  }{\left( \mathbf{x}^{H} \mathbf{T} \mathbf{x}  \right) ^2} \leqslant \xi \mathbf{I}_{M^2_{\mathrm{R}}}.
\end{equation*}
Also, as $\mathbf{T}  \mathbf{x} \mathbf{x}^{H} \mathbf{T}$ is a rank-one matrix, we can choose $\xi$ as $\xi \geqslant 4 \phi $, where $\phi$ is given as
\begin{equation} \label{pi}
 \phi= \displaystyle \max_{\mathbf{x}}  \hspace{5pt} \frac{   \mathbf{x}^{H} {\mathbf{T}}^2 \mathbf{x} }{\left( \mathbf{x}^{H} \mathbf{T} \mathbf{x}   \right) ^2}.
\end{equation}
Then by choosing $\mathbf{a}=\mathbf{V}^H \mathbf{x}$, where $\mathbf{V}$ is a full-rank matrix such that $\mathbf{T}=\mathbf{V} \mathbf{V}^H$, the following optimization is equivalently obtained from~\eqref{pi} as
\begin{equation}
 \phi= \displaystyle \max_{\mathbf{a}}  \hspace{5pt} \frac{\mathbf{a}^{H} \mathbf{V}^{H}  \mathbf{V}   \mathbf{a}  }{\mathbf{a}^{H} \mathbf{a}}  \frac{1}{\mathbf{a}^{H} \mathbf{a}} .
\end{equation}
Using $\mathbf{x}^H \mathbf{Q} \mathbf{x} \leq P$ and applying a similar procedure in
\cite[Appendix B]{sumrate},
we can write
\begin{equation*}
 \phi \leq  \frac{ P \lambda _{\textrm{max}} (\mathbf{T})}{\mathbf{v}^{H}_{1} \mathbf{V}^{-1}  \mathbf{Q} \mathbf{V}^{-H}  \mathbf{v}_{1}  }  ,
\end{equation*}
where $\mathbf{v}_{1}$ is the principal eigenvector of $\mathbf{V}^{H}  \mathbf{V}$. Finally, from~\eqref{app41},~\eqref{app42}, and $\xi= 4 \phi$, we obtain
$
 \mathbf{b}={{\nabla} {s (\mathbf{x})}} {\vert}_{\mathbf{x}=\mathbf{x}_0}=   \frac{-2\log_2 e}{\mathbf{x}^{H}_{0} \mathbf{T} \mathbf{x}_{0} + \nu}  \mathbf{T} \mathbf{x}_{0}$ and $
 \mathbf{D}= \frac{4P}{\mathbf{w}^{H}_{1} \mathbf{Q} \mathbf{w}_1 }   \mathbf{I}_{M^2_{\mathrm{R}}},
$
where $\mathbf{w}_1$ is the principal eigenvector of $\mathbf{T}$.
\section{A selection of $\beta_{k,n,a}$ and  $\beta_{k,n,b}$} \label{beta12}
The value of $\beta_{k,n,b}$ should be selected such that ${\nabla}^{2}_{{\mathbf{s}}_{\mathrm{E},n}} {E}_{k} \left( \{{\mathbf{s}}_{\mathrm{E},n}\}_{n=1}^N \right) + {\beta}_{k,n,b} \mathbf{I}_{{{K}}} \succeq \mathbf{0}$. The term ${\nabla}^{2}_{{\mathbf{s}}_{\mathrm{E},n}} {E}_{k} \left( \{{\mathbf{s}}_{\mathrm{E},n}\}_{n=1}^N \right)$ is straightforwardly calculated as
\begin{align}
{\nabla}^{2}_{{\mathbf{s}}_{\mathrm{E},n}} {E}_{k} \left( \{{\mathbf{s}}_{\mathrm{E},n}\}_{n=1}^N \right) =&\varrho_k \sum_{n=1}^{{N}} \hm{\Xi}_{k,n} + \eta_k \sum_{n=1}^{{N}} \sum_{n^{\prime}=1}^{{N}}  \hm{\Xi}_{k,n} \mathbf{s}_{\mathrm{E},n} \mathbf{s}^{H}_{\mathrm{E},n^{\prime}} \hm{\Xi}_{k,n^{\prime}},
\end{align}
where
\begin{equation*}
\varrho_k = \frac{\tau}{\rho}  \textrm{exp}{\widetilde{c}} \hspace{2pt} \textrm{exp} \left( \widetilde{a}  {\log}^2   p_{\mathrm{E},k}       \right) p^{  \widetilde{b}-1}_{\mathrm{E},k} \left(  2 \widetilde{a}  \log    p_{\mathrm{E},k}  \hspace{2pt} + \widetilde{b} \right),
\end{equation*}
\begin{align*}
\eta_k =& \frac{\tau \textrm{exp}{\widetilde{c}}\hspace{2pt} \textrm{exp} \left({ \widetilde{a}   {\log}^2    p_{\mathrm{E},k} }\right) p^{  \widetilde{b}-2}_{\mathrm{E},k} }{\rho} \Big( 4 {\widetilde{a}}^2  {\log}^2    p_{\mathrm{E},k}  + \left( 4{\widetilde{a}}{\widetilde{b}} -2{\widetilde{a}} \right) \log  \hspace{0.5mm}  p_{\mathrm{E},k}   \hspace{2pt}  + {\widetilde{b}}^2 -{\widetilde{b}} + 2 \widetilde{a} \Big).
\end{align*}
As $\widetilde{a}<0, \widetilde{b} <0,\hm{\Xi}_{k,n}\succeq \mathbf{0},$ and $\hm{\Xi}_{k,n} \mathbf{s}_{\mathrm{E},n} \mathbf{s}^{H}_{\mathrm{E},n} \hm{\Xi}_{k,n}\succeq \mathbf{0}$, it suffices to choose $\beta_{k,n,b}$ such that
\begin{align} \label{f0}
{\beta}_{k,n,b} \mathbf{I}_{{{K}}} \succeq& - \widetilde{\varrho}_k \sum_{n=1}^{{N}} \hm{\Xi}_{k,n} - \widetilde{\eta}_k \sum_{n=1}^{{N}} \sum_{n^{\prime}=1}^{{N}}  \hm{\Xi}_{k,n} \mathbf{s}_{\mathrm{E},n} \mathbf{s}^{H}_{\mathrm{E},n^{\prime}} \hm{\Xi}_{k,n^{\prime}},
\end{align}
where
\begin{align*}
\widetilde{\varrho}_k =  \frac{\tau}{\rho}  \widetilde{b} \hspace{2pt} \textrm{exp}{\widetilde{c}}\hspace{2pt} \textrm{exp} \left({ \widetilde{a}   {\log}^2    p_{\mathrm{E},k} }\right) p^{  \widetilde{b}-1}_{\mathrm{E},k} ,
\end{align*}
\begin{align*}
\widetilde{\eta}_k =& \frac{\tau}{\rho} \textrm{exp}{\widetilde{c}}\hspace{2pt} \textrm{exp} \left({ \widetilde{a}   {\log}^2    p_{\mathrm{E},k} }\right) p^{  \widetilde{b}-2}_{\mathrm{E},k}  \left( \log    p_{\mathrm{E},k}    \left( 4{\widetilde{a}}{\widetilde{b}} -2{\widetilde{a}} \right) + 2 \widetilde{a} \right).
\end{align*}
Thus from~\eqref{41}, we can write
\begin{equation} \label{fi}
\|{\mathbf{s}}_{\mathrm{E},n}\|^2_2 \leq \frac{2\rho T}{{\tau}} \sum_{k=1}^{{K}} p^{\mathrm{rf}}_{k,n}.
\end{equation}
Finally, using~\eqref{fg10},~\eqref{f0},~\eqref{fi} and knowing  that $\mathbf{s}^{H}_{\mathrm{E},n} \hm{\Xi}_{k,n} \mathbf{s}_{\mathrm{E},n} \leq \|{\mathbf{s}}_{\mathrm{E},n}\|^2_2 \lambda_{\textrm{max}} \left(\hm{\Xi}_{k,n} \right)$, we can select ${\beta}_{k,n,b} > {\beta}^{t}_{k,n,b} $ where 
\begin{align*}
{\beta}^{t}_{k,n,b}=&-  \frac{\tau}{\rho}   \textrm{exp}{ \widetilde{c}}  \hspace{2pt} \textrm{exp}\left(   2\widetilde{a}   {\log}^2   T  \sum_{n=1}^{{N}}  \lambda_{\textrm{max}} \left(\hm{\Xi}_{k,n} \right)  \sum_{k=1}^{{K}} p^{\mathrm{rf}}_{k,n}      \right) \widetilde{f}^{  \widetilde{b}-2}_{k}
\Bigg(
 \left( \left( 4{\widetilde{a}}{\widetilde{b}} -2{\widetilde{a}} \right) \log   \widetilde{f}_{k} \hspace{2pt}  + 2 \widetilde{a} \right) \\ \nonumber & \times \sum_{n=1}^{{N}} \sum_{n^{\prime}=1}^{{N}}  \lambda_{\textrm{max}} \left(\hm{\Xi}_{k,n} \hm{\Xi}_{k,n^{\prime}} \right) \sum_{k=1}^{{K}} \sqrt{p^{\mathrm{rf}}_{k,n} p^{\mathrm{rf}}_{k,n^{\prime}}}
+\widetilde{b} \hspace{2pt} \widetilde{f}_{k} \sum_{n=1}^{{N}} \lambda_{\textrm{max}} \left(\hm{\Xi}_{k,n} \right)
\Bigg),
\end{align*}
with $
\widetilde{f}_{k} = \exp\bigg( {  \frac{-\widetilde{b}-\sqrt{\widetilde{b}^2 -4  \widetilde{a}\hspace{2pt} \log  \frac{\rho E_{\textrm{min},k}}{{\tau} \textrm{exp}{\widetilde{c}}}   } }{2\widetilde{a}}  }\bigg).
$
We can take similar steps for selecting ${\beta}^{t}_{k,n,a}$.
\section{Proof of Lemma~2} \label{app1000}
The ID part of the relay power constraint in~\eqref{bb2} is
$
\mathbf{u}^{H}_{\mathrm{I},n} \widetilde{\mathbf{A}}^{\mathrm{R}}_{\mathrm{I},n}   \mathbf{u}_{\mathrm{I},n}.
$
Only  $(iM_{\mathrm{IRS}}+i+1)^{th},~ 0 \leq i \leq M_{\mathrm{IRS}}-1$ entries of $\mathbf{u}_{\mathrm{I},n}=\textrm{vec}(\textrm{Diag}({\hm{\theta}}_{\mathrm{I}}))$ are non-zero for the IRS system. Thus, we can rewrite $
\mathbf{u}^{H}_{\mathrm{I},n} \widetilde{\mathbf{A}}^{\mathrm{R}}_{\mathrm{I},n}   \mathbf{u}_{\mathrm{I},n}
$ for IRS system  as ${\hm{\theta}}^H_{\mathrm{I}} \widetilde{\mathbf{A}}^{\mathrm{IRS}}_{\mathrm{I},n}   {\hm{\theta}}_{\mathrm{I}}$, where $\widetilde{\mathbf{A}}^{\mathrm{IRS}}_{\mathrm{I},n}$ contains only the $(kM_{\mathrm{IRS}}+k+1,lM_{\mathrm{IRS}}+l+1)^{th},~ 0\leq k,l \leq M_{\mathrm{IRS}}-1$ entries of $\widetilde{\mathbf{A}}_{\mathrm{I},n}$ which is the same as $\widetilde{\mathbf{A}}^{\mathrm{R}}_{\mathrm{I},n}$ with replacing $M_{\mathrm{R}}$ by $M_{\mathrm{IRS}}$. Therefore, from~\eqref{nnh} and by using some matrix manipulations, we obtain
\begin{equation}
 \widetilde{\mathbf{A}}^{\mathrm{IRS}}_{\mathrm{I},n} =   {\left( {\mathbf{H}}_n \mathbf{Q}_{\mathrm{I},n} {\mathbf{H}}^{H}_n \right)}^{T} \odot \mathbf{I}_{{M_{\mathrm{IRS}}}}  +\sigma_{n}^{2} \mathbf{I}_{{M_{\mathrm{IRS}}}}.
\end{equation}
Other expressions in~\eqref{nnh1}-\eqref{keyAbar1} are similarly obtained.
\bibliographystyle{IEEETran}
\bibliography{myreff}

\begin{thebibliography}{10}
\providecommand{\url}[1]{#1}
\csname url@samestyle\endcsname
\providecommand{\newblock}{\relax}
\providecommand{\bibinfo}[2]{#2}
\providecommand{\BIBentrySTDinterwordspacing}{\spaceskip=0pt\relax}
\providecommand{\BIBentryALTinterwordstretchfactor}{4}
\providecommand{\BIBentryALTinterwordspacing}{\spaceskip=\fontdimen2\font plus
\BIBentryALTinterwordstretchfactor\fontdimen3\font minus
  \fontdimen4\font\relax}
\providecommand{\BIBforeignlanguage}[2]{{%
\expandafter\ifx\csname l@#1\endcsname\relax
\typeout{** WARNING: IEEEtran.bst: No hyphenation pattern has been}%
\typeout{** loaded for the language `#1'. Using the pattern for}%
\typeout{** the default language instead.}%
\else
\language=\csname l@#1\endcsname
\fi
#2}}
\providecommand{\BIBdecl}{\relax}
\BIBdecl

\bibitem{ho2012optimal}
C.~K. Ho and R.~Zhang, ``Optimal energy allocation for wireless communications
  with energy harvesting constraints,'' \emph{IEEE Transactions on Signal
  Processing}, vol.~60, no.~9, pp. 4808--4818, Sept. 2012.

\bibitem{perera2017simultaneous}
T.~D.~P. Perera, D.~N.~K. Jayakody, S.~K. Sharma, S.~Chatzinotas, and J.~Li,
  ``Simultaneous wireless information and power transfer ({SWIPT}): Recent
  advances and future challenges,'' \emph{IEEE Communications Surveys \&
  Tutorials}, vol.~20, no.~1, pp. 264--302, Firstquarter 2017.

\bibitem{liu2012wireless}
L.~Liu, R.~Zhang, and K.-C. Chua, ``Wireless information transfer with
  opportunistic energy harvesting,'' \emph{IEEE Transactions on Wireless
  Communications}, vol.~12, no.~1, pp. 288--300, January 2012.

\bibitem{rostampoor2017energy}
J.~Rostampoor, S.~M. Razavizadeh, and I.~Lee, ``Energy efficient precoding
  design for {SWIPT} in {MIMO} two-way relay networks,'' \emph{IEEE
  Transactions on Vehicular Technology}, vol.~66, no.~9, pp. 7888--7896, Sept.
  2017.

\bibitem{lin2018source}
C.-T. Lin, R.~Y. Chang, and F.-S. Tseng, ``Source and relay precoding for
  full-duplex {MIMO} relaying with a {SWIPT}-enabled destination,'' \emph{IEEE
  Communications Letters}, vol.~22, no.~8, pp. 1700--1703, Aug. 2018.

\bibitem{chen2015joint}
Y.~{Chen}, Z.~{Wen}, S.~{Wang}, J.~{Sun}, and M.~{Li}, ``Joint relay
  beamforming and source receiving in {MIMO} two-way {AF} relay network with
  energy harvesting,'' in \emph{IEEE Vehicular Technology Conference (VTC
  Spring)}, May 2015, pp. 1--5.

\bibitem{renzo2019smart}
M.~D. Renzo, M.~Debbah, D.-T. Phan-Huy, A.~Zappone, M.-S. Alouini, C.~Yuen,
  V.~Sciancalepore, G.~C. Alexandropoulos, J.~Hoydis, H.~Gacanin \emph{et~al.},
  ``Smart radio environments empowered by reconfigurable {AI} meta-surfaces: An
  idea whose time has come,'' \emph{EURASIP Journal on Wireless Communications
  and Networking}, vol. 2019, no.~1, pp. 1--20, 2019.

\bibitem{wu2019weighted}
Q.~Wu and R.~Zhang, ``Weighted sum power maximization for intelligent
  reflecting surface aided {SWIPT},'' \emph{IEEE Wireless Communications
  Letters}, vol.~9, no.~5, pp. 586--590, 2019.

\bibitem{khalili2021multi}
A.~Khalili, S.~Zargari, Q.~Wu, D.~W.~K. Ng, and R.~Zhang, ``Multi-objective
  resource allocation for {IRS}-aided {SWIPT},'' \emph{IEEE Wireless
  Communications Letters}, vol.~10, no.~6, pp. 1324--1328, 2021.

\bibitem{wu2020joint}
Q.~Wu and R.~Zhang, ``Joint active and passive beamforming optimization for
  intelligent reflecting surface assisted {SWIPT} under {QoS} constraints,''
  \emph{IEEE Journal on Selected Areas in Communications}, vol.~38, no.~8, pp.
  1735--1748, 2020.

\bibitem{pan2020intelligent}
C.~Pan, H.~Ren, K.~Wang, M.~Elkashlan, A.~Nallanathan, J.~Wang, and L.~Hanzo,
  ``Intelligent reflecting surface aided {MIMO} broadcasting for simultaneous
  wireless information and power transfer,'' \emph{IEEE Journal on Selected
  Areas in Communications}, vol.~38, no.~8, pp. 1719--1734, 2020.

\bibitem{boaventura2013optimum}
A.~Boaventura, A.~Collado, N.~B. Carvalho, and A.~Georgiadis, ``Optimum
  behavior: Wireless power transmission system design through behavioral models
  and efficient synthesis techniques,'' \emph{IEEE Microwave Magazine},
  vol.~14, no.~2, pp. 26--35, March-April 2013.

\bibitem{clerckx57}
B.~Clerckx, E.~Bayguzina, D.~Yates, and P.~D. Mitcheson, ``Waveform
  optimization for wireless power transfer with nonlinear energy harvester
  modeling,'' in \emph{2015 International Symposium on Wireless Communication
  Systems (ISWCS)}, Aug. 2015, pp. 276--280.

\bibitem{clerckx60}
M.~R.~V. Moghadam, Y.~Zeng, and R.~Zhang, ``Waveform optimization for
  radio-frequency wireless power transfer,'' in \emph{IEEE International
  Workshop on Signal Processing Advances in Wireless Communications (SPAWC)},
  July 2017, pp. 1--6.

\bibitem{clerckx48}
B.~Clerckx and E.~Bayguzina, ``Low-complexity adaptive multisine waveform
  design for wireless power transfer,'' \emph{IEEE Antennas and Wireless
  Propagation Letters}, vol.~16, pp. 2207--2210, May 2017.

\bibitem{clerckx59}
Y.~Huang and B.~Clerckx, ``Waveform design for wireless power transfer with
  limited feedback,'' \emph{IEEE Transactions on Wireless Communications},
  vol.~17, no.~1, pp. 415--429, Jan. 2017.

\bibitem{zhao2021irs}
Y.~Zhao, B.~Clerckx, and Z.~Feng, ``{IRS}-aided {SWIPT}: {Joint} waveform,
  active and passive beamforming design under nonlinear harvester model,''
  \emph{IEEE Transactions on Communications}, vol.~70, no.~2, pp. 1345--1359,
  2021.

\bibitem{clerckx58}
Y.~Huang and B.~Clerckx, ``Large-scale multiantenna multisine wireless power
  transfer,'' \emph{IEEE Transactions on Signal Processing}, vol.~65, no.~21,
  pp. 5812--5827, Nov. 2017.

\bibitem{clerckx69}
B.~Clerckx, Z.~B. Zawawi, and K.~Huang, ``Wirelessly powered backscatter
  communications: {Waveform} design and {SNR}-energy tradeoff,'' \emph{IEEE
  Communications Letters}, vol.~21, no.~10, pp. 2234--2237, Oct. 2017.

\bibitem{clerckx70}
Z.~B. Zawawi, Y.~Huang, and B.~Clerckx, ``Multiuser wirelessly powered
  backscatter communications: {Nonlinearity}, waveform design, and
  {SINR}-energy tradeoff,'' \emph{IEEE Transactions on Wireless
  Communications}, vol.~18, no.~1, pp. 241--253, Jan. 2018.

\bibitem{clerckx2017wireless}
B.~Clerckx, ``Wireless information and power transfer: {Nonlinearity}, waveform
  design, and rate-energy tradeoff,'' \emph{IEEE Transactions on Signal
  Processing}, vol.~66, no.~4, pp. 847--862, Feb. 2017.

\bibitem{lee2018joint}
H.~{Lee}, K.~{Lee}, H.~{Kim}, and I.~{Lee}, ``Joint transceiver optimization
  for {MISO} {SWIPT} systems with time switching,'' \emph{IEEE Transactions on
  Wireless Communications}, vol.~17, no.~5, pp. 3298--3312, May 2018.

\bibitem{zhang2021active}
Z.~Zhang, L.~Dai, X.~Chen, C.~Liu, F.~Yang, R.~Schober, and H.~V. Poor,
  ``Active {RIS} vs. passive {RIS}: Which will prevail in {6G}?'' \emph{arXiv
  preprint arXiv:2103.15154}, 2021.

\bibitem{shin2016mimo}
W.~Shin, B.~Lee, B.~Shim, and J.~Lee, ``A {MIMO} relay with delayed feedback
  can improve {DoF} in $ k $-user {MISO} interference channel with no {CSIT},''
  \emph{IEEE Transactions on Vehicular Technology}, vol.~65, no.~12, pp.
  10\,188--10\,192, 2016.

\bibitem{jiang2022interference}
T.~Jiang and W.~Yu, ``Interference nulling using reconfigurable intelligent
  surface,'' \emph{IEEE Journal on Selected Areas in Communications}, 2022.

\bibitem{bafghi2020degrees}
A.~H.~A. Bafghi, V.~Jamali, M.~Nasiri-Kenari, and R.~Schober, ``Degrees of
  freedom of the $ k $-user interference channel in the presence of intelligent
  reflecting surfaces,'' \emph{arXiv preprint arXiv:2012.13787}, 2020.

\bibitem{cheng2015degrees}
Z.~Cheng, N.~Devroye, and T.~Liu, ``The degrees of freedom of full-duplex
  bidirectional interference networks with and without a {MIMO} relay,''
  \emph{IEEE Transactions on Wireless Communications}, vol.~15, no.~4, pp.
  2912--2924, 2015.

\bibitem{wang2020channel}
Z.~Wang, L.~Liu, and S.~Cui, ``Channel estimation for intelligent reflecting
  surface assisted multiuser communications: Framework, algorithms, and
  analysis,'' \emph{IEEE Transactions on Wireless Communications}, vol.~19,
  no.~10, pp. 6607--6620, 2020.

\bibitem{kudathanthirige2017massive}
D.~Kudathanthirige and G.~A.~A. Baduge, ``Massive {MIMO} configurations for
  multi-cell multi-user relay networks,'' \emph{IEEE Transactions on Wireless
  Communications}, vol.~17, no.~3, pp. 1849--1868, 2017.

\bibitem{long2021active}
R.~Long, Y.-C. Liang, Y.~Pei, and E.~G. Larsson, ``Active reconfigurable
  intelligent surface-aided wireless communications,'' \emph{IEEE Transactions
  on Wireless Communications}, vol.~20, no.~8, pp. 4962--4975, 2021.

\bibitem{bousquet20114}
J.-F. Bousquet, S.~Magierowski, and G.~G. Messier, ``A {4-GHz} active scatterer
  in 130-nm {CMOS} for phase sweep amplify-and-forward,'' \emph{IEEE
  Transactions on Circuits and Systems I: Regular Papers}, vol.~59, no.~3, pp.
  529--540, 2011.

\bibitem{clerckx2018beneficial}
B.~Clerckx and J.~Kim, ``On the beneficial roles of fading and transmit
  diversity in wireless power transfer with nonlinear energy harvesting,''
  \emph{IEEE Transactions on Wireless Communications}, vol.~17, no.~11, pp.
  7731--7743, Nov. 2018.

\bibitem{song2015sequence}
J.~Song, P.~Babu, and D.~P. Palomar, ``Sequence design to minimize the weighted
  integrated and peak sidelobe levels,'' \emph{IEEE Transactions on Signal
  Processing}, vol.~64, no.~8, pp. 2051--2064, 2015.

\bibitem{rezaei2019throughput}
O.~Rezaei, M.~M. Naghsh, Z.~Rezaei, and R.~Zhang, ``Throughput optimization for
  wireless powered interference channels,'' \emph{IEEE Transactions on Wireless
  Communications}, vol.~18, no.~5, pp. 2464--2476, May 2019.

\bibitem{naghsh2019max}
M.~M. Naghsh, M.~Masjedi, A.~Adibi, and P.~Stoica, ``Max--min fairness design
  for {MIMO} interference channels: A minorization--maximization approach,''
  \emph{IEEE Transactions on Signal Processing}, vol.~67, no.~18, pp.
  4707--4719, Sept. 2019.

\bibitem{ben2001lectures}
A.~Ben-Tal and A.~Nemirovski, \emph{Lectures on modern convex optimization:
  analysis, algorithms, and engineering applications}.\hskip 1em plus 0.5em
  minus 0.4em\relax SIAM, 2001.

\bibitem{czumaj1996very}
A.~Czumaj, ``Very fast approximation of the matrix chain product problem,''
  \emph{Journal of Algorithms}, vol.~21, no.~1, pp. 71--79, 1996.

\bibitem{cvx}
M.~Grant and S.~Boyd, ``{CVX}: Matlab software for disciplined convex
  programming, version 2.0 beta, sept. 2012,'' \emph{Available on-line at
  http://cvxr. com/cvx}.

\bibitem{sumrate}
M.~M. Naghsh, M.~Soltanalian, P.~Stoica, M.~Masjedi, and B.~Ottersten,
  ``Efficient sum-rate maximization for medium-scale {MIMO} {AF}-relay
  networks,'' \emph{IEEE Transactions on Wireless Communications}, vol.~15,
  no.~9, pp. 6400--6411, Sept. 2016.

\end{thebibliography}
\end{document}